\documentclass[twocolumn,amsthm]{autart}

\usepackage{amsmath,amssymb}
\usepackage{mathtools}
\usepackage{tikz}
\usetikzlibrary{arrows.meta,positioning,fit,backgrounds,calc,decorations.pathreplacing,calligraphy}
\usepackage{bm}
\usepackage{enumitem}
\usepackage{booktabs}

\newcommand{\R}{\mathbb{R}}

\newcommand{\Linf}{L^\infty}
\newcommand{\Lip}{\mathrm{Lip}}

\newcommand{\Lipk}{\mathrm{Lip}_K}

\newcommand{\eps}{\varepsilon}

\theoremstyle{plain}
\newtheorem{theorem}{Theorem}
\newtheorem{lemma}{Lemma}

\theoremstyle{definition}
\newtheorem{definition}{Definition}
\newtheorem{assumption}{Assumption}
\newtheorem{remark}{Remark}

\allowdisplaybreaks
\begin{document}

\begin{frontmatter}

\title{\bfseries Approximate Feedback Linearization for a
Nonlinear Hyperbolic PDE Class --- Part II:
Neural Operator}
\label{title:main}

\author{Miroslav Krstic}
\ead{mkrstic@ucsd.edu}
\thanks{Department of Mechanical and Aerospace Engineering, University of California San Diego, La Jolla, CA 92093-0411.}
\thanks{The principal AI aid in developing the paper was Claude.}

\begin{abstract}
Volterra series feedback linearizes a class of nonlinear hyperbolic PDEs but produces a controller that, even after truncation, demands solving a tower of plant-specific kernel PDEs and evaluating nested integrals. We prove the truncated controller is jointly Lipschitz in plant and state, and learn it as a single neural operator from plant nonlinearity and state to boundary control. Once trained, no kernel is ever solved again, for any plant in the trained class. The closed loop is practically stable in class-$\mathcal{KL}$ form, with a residual ball scaling linearly with training accuracy.
\end{abstract}

\end{frontmatter}

\section{Introduction}
\label{sec:intro}

This is the second of two companion papers on approximate feedback linearization of a class of nonlinear hyperbolic PDEs. Part~I~\cite{krstic2026feedbacklinearizationhyperbolicpdes-trunc} replaces the exact infinite Volterra linearizer of~\cite{krstic2026feedbacklinearizationhyperbolicpdes} with a finite truncation and establishes its closed-loop stability in $L^\infty$. Part~II, the present paper, replaces the truncated controller with a learned neural-operator surrogate that eliminates the online kernel solve and the nested Volterra integration, and re-establishes the same closed-loop guarantees under two simultaneous approximations.

Volterra series feedback turns a nonlinear hyperbolic PDE into a transport equation, exactly. The price is a controller defined by an infinite series of nested integrals over kernels living on simplices of growing dimension --- a feedback no implementation can evaluate and no offline solver can precompute. Truncation at some finite order $N$ is forced upon us, and the truncated controller is what gets implemented. The question is what truncation does to the closed loop.

Part~I~\cite{krstic2026feedbacklinearizationhyperbolicpdes-trunc} answers it. With the sup-norm machinery developed there, the truncated feedback delivers, on a sup-norm ball whose radius is set by the contraction radius of the inverse backstepping transformation, four properties: forward invariance, a practical-stability bound holding for all time, finite-time attractivity to a residual ball determined by the truncation tail $\eps_N(r_0)$, and a class-$\mathcal{KL}$ asymptotic estimate. Stabilization of~\cite{krstic2026feedbacklinearizationhyperbolicpdes} is recovered in the high-order limit.

But the truncated controller, while finite, is still computationally expensive: each evaluation of $\mathcal{U}_N(\mathbf{f}_N, u(\cdot,t))$ requires an $N$-fold nested integration against $N - 1$ separately-precomputed kernels $k_2, \ldots, k_N$, with each $k_n$ obtained by solving a linear PDE on an $n$-simplex. Worse, this kernel computation is plant-specific: change the nonlinearity $F$, and every $k_n$ must be solved for again from scratch. The kernels are not the controller; they are an intermediate object the controller is forced to traverse.

The proper object of study is the operator $\mathcal{U}_N(\mathbf{f}_N, u)$ that maps the plant nonlinearity, with Volterra coefficients $\mathbf{f}_N = (f_2, \ldots, f_N)$, together with the state, to the boundary input. This is what gets implemented in the loop, and it is what we propose to learn. A neural operator trained on a class of admissible plants emulates $\mathcal{U}_N$ on that whole class at once. The kernel PDEs are solved offline only as part of generating training data; once the network is trained, no kernel is ever solved again --- not when the plant nonlinearity is changed within the trained class, not when the state is updated in real time, not at all. A single forward pass of fixed cost replaces the nested-integral computation, for any plant in the class.

This note is the analysis of the closed loop under such a neural surrogate. Once $\mathcal{U}_N$ is replaced by a sup-norm-accurate approximation $\widehat{\mathcal{U}}_N$ with error $\eps$ on the joint $(\mathbf{f}_N, u)$ space, the boundary residual driving the target transport equation acquires a new contribution: $b(t) = -[\widehat{\mathcal{U}}_N - \mathcal{U}_N] + [\mathcal{U}_N - \mathcal{U}]$, the first piece of size $\eps$ from the network, the second of size $\eps_N$ from the truncation. The four stability properties of~\cite{krstic2026feedbacklinearizationhyperbolicpdes-trunc} are re-established under this combined residual, and the asymptotic envelope takes the conventional form $\|u(\cdot, t)\|_\infty \le \beta(\|u_0\|_\infty, t) + \gamma(\eps)$, with $\beta$ of class $\mathcal{KL}$ and $\gamma$ a linear class-$\mathcal{K}$ function. As $\eps \to 0$, the truncated paper's class-$\mathcal{KL}$ behavior is recovered.

The contributions of this note over~\cite{krstic2026feedbacklinearizationhyperbolicpdes-trunc} are:
(i) a Lipschitz-continuity result for the truncated Volterra feedback operator $\mathcal{U}_N$ as a joint function of the plant kernels and the state, which serves as the basis for the universal-approximation guarantee in (ii);
(ii) existence of a neural-operator approximation $\widehat{\mathcal{U}}_N$ achieving any prescribed sup-norm accuracy on bounded sets of plants and states;
(iii) the practical-stability theorem under neural-operator feedback, with explicit linear-in-$\eps$ residual envelope and a sharpened initial-condition bound that tightens linearly with $\eps$.

The truncated paper made the controller implementable in principle. This note makes it implementable in practice, once and for all per plant class.

The paper is organized as follows. Sections~\ref{sec:operator}--\ref{sec:proof-thm1} establish the joint Lipschitz continuity of $\mathcal{U}_N$ in coefficients and state (Theorem~\ref{thm:main}); Section~\ref{sec:NO} produces the neural-operator surrogate (Theorem~\ref{thm:NO}); Sections~\ref{sec:closedloop}--\ref{sec:proof-stab} close the loop and prove practical stability under the surrogate feedback (Theorem~\ref{thm:stab}); Section~\ref{sec:numerics} demonstrates the framework at $N=3$; and Section~\ref{sec:conclusion} concludes. Appendix~\ref{app:numerical-details} gathers the numerical details.

\section{Operator Definition}
\label{sec:operator}

We consider the hyperbolic plant
\begin{eqnarray}
u_t(x,t) &=& u_x(x,t) + F[u(\cdot,t)](x), \nonumber \\
&& u(0,t)=0, \quad u(1,t)=U(t),
\label{eq:plant}
\end{eqnarray}
with initial condition $u(\cdot, 0) = u_0$ and plant nonlinearity
\begin{eqnarray}
F[u](x,t) &=& \sum_{n=2}^\infty \int_{T_n(x)} f_n(x,\xi_1,\ldots,\xi_n) \nonumber \\
&& \times \prod_{i=1}^n u(\xi_i,t)\, d\xi_n\cdots d\xi_1,
\label{eq:Fdef}
\end{eqnarray}
where $T_n(x) := \{(\xi_1,\ldots,\xi_n) : 0 \le \xi_n \le \cdots \le \xi_1 \le x\}$ is the standard simplex and $\Delta_n := \{(x,\xi_1,\ldots,\xi_n) : 0 \le \xi_n \le \cdots \le \xi_1 \le x \le 1\}$. The control input is $U(t)$, applied at the boundary $x = 1$.

\begin{assumption}[Bound on Volterra coefficients]
\label{ass:fnbound}
There exist constants $D_f, \rho_f > 0$ such that
\begin{equation}
\|f_n\|_{\Linf} \;\le\; D_f\,\frac{n!}{\rho_f^{\,n-1}},
\quad n \ge 2.
\label{eq:fnbound}
\end{equation}
\end{assumption}

The order-$N$ truncated boundary controller is the operator
\begin{eqnarray}
\mathcal{U}_N(\mathbf{f}_N, u)
&=& \mathcal{U}_{N,k}(\mathbf{k}_N, u) \nonumber \\
&:=& \sum_{n=2}^{N}\, \int_{T_n(1)} k_n(1,\xi_1,\ldots,\xi_n) \nonumber \\
&& \times \prod_{i=1}^{n} u(\xi_i)\, d\xi_n\cdots d\xi_1,
\label{eq:KNdef}
\end{eqnarray}
acting jointly on the plant Volterra coefficients $\mathbf{f}_N = (f_2, \ldots, f_N)$ and the state $u$, with $\mathbf{k}_N = (k_2, \ldots, k_N)$ the kernels generated from $\mathbf{f}_N$ via \eqref{eq:k2def}--\eqref{eq:kndef} below. The exact (untruncated) feedback operator is its limit $\mathcal{U}(\mathbf{f}_N, u) := \lim_{N\to\infty} \mathcal{U}_N(\mathbf{f}_N, u)$, which is the boundary feedback law of the hyperbolic feedback-linearization construction~\cite{krstic2026feedbacklinearizationhyperbolicpdes}. As established in~\cite{krstic2026feedbacklinearizationhyperbolicpdes-trunc}, on every set $\{u \in \Linf(0,1) : \|u\|_\infty \le r\}$ with $r \in [0, 1/C_K)$, the truncation residual is bounded by
\begin{eqnarray}
|\mathcal{U}(\mathbf{f}_N, u) - \mathcal{U}_N(\mathbf{f}_N, u)| &\le& \eps_N(\|u\|_\infty), \nonumber \\
\eps_N(r) &:=& \frac{D_K\, e^{\Upsilon_K}\, (C_K r)^{N+1}}{C_K\,(1 - C_K r)},
\label{eq:epsN-def}
\end{eqnarray}
where $D_K, C_K, \Upsilon_K > 0$ are constants determined by the plant Volterra coefficients $\mathbf{f}_N$. The kernels $\{k_n\}_{n\ge 2}$ in \eqref{eq:KNdef} are generated from $\mathbf{f}_N$ by the characteristic recursion of~\cite{krstic2026feedbacklinearizationhyperbolicpdes},
\begin{equation}
k_2(x,\xi_1,\xi_2) = -\int_0^{\xi_2} f_2(x-\xi_2+s,\, \xi_1-\xi_2+s,\, s)\, ds,
\label{eq:k2def}
\end{equation}
\begin{eqnarray}
k_n(x,\xi_1,\ldots,\xi_n) &=& -\int_0^{\xi_n}\Bigl[f_n \nonumber \\
&& -\, \sum_{m=2}^{n-1} B_n^m[k_{n-m+1},f_m]\Bigr] \nonumber \\
&& \times\, \bigl(x-\xi_n+s,\, \xi_1-\xi_n+s, \nonumber \\
&& \quad \ldots,\, \xi_{n-1}-\xi_n+s,\, s\bigr)\, ds, \quad n\ge 3,
\label{eq:kndef}
\end{eqnarray}
where the bilinear coupling operator $B_n^m$ is given, with the convention $\xi_0 := x$, by
\begin{eqnarray}
&& B_n^m[k,f](x,\xi_1,\ldots,\xi_n) \nonumber \\
&& = \sum_{j=1}^{n-m+1}\int_{\xi_j}^{\xi_{j-1}}
\bigl[D_j^{n-m+1,m} k\bigr](x,\xi_1,\ldots, \nonumber \\
&& \quad\ \ \xi_{j-1},s,\xi_j,\ldots,\xi_{n-m}) \nonumber \\
&& \quad\ \ \times f(s,\xi_{n-m+1},\ldots,\xi_n)\,ds,
\label{eq:Bdef}
\end{eqnarray}
with the permutation-summation operator
\begin{eqnarray}
&& \bigl[D_j^{p,m}g\bigr](x,\zeta_1,\ldots,\zeta_{p+m-1}) \nonumber \\
&& = \!\!\!\sum_{\substack{(\gamma_1,\ldots,\gamma_{p+m-1-j}) \\ \in\, P_{p-j}(\zeta_{j+1},\ldots,\zeta_{p+m-1})}}\!\!\!
g(x,\zeta_1,\ldots,\zeta_{j-1}, \nonumber \\
&& \qquad\qquad\qquad \zeta_j,\gamma_1,\ldots,\gamma_{p+m-1-j}),
\label{eq:Ddef}
\end{eqnarray}
where $P_{p-j}(\zeta_{j+1},\ldots,\zeta_{p+m-1})$ is the set of ordered $(p+m-1-j)$-tuples whose first $p-j$ entries are any $p-j$ elements of $\{\zeta_{j+1},\ldots,\zeta_{p+m-1}\}$ in their original order, with the remaining $m-1$ entries being the leftover elements of the same set, also in their original order. The number of such tuples is $\binom{p+m-1-j}{p-j}$.

We define the admissible plant-coefficient class and the admissible state class.

\begin{definition}[Admissible classes]
\label{def:classes}
For integers $N\ge 2$ and constants $B_f, L_f, R, L_u > 0$, set
\begin{eqnarray}
\lefteqn{\mathcal{F}_n(B_f,L_f) := \bigl\{ f \in C^{0,1}([0,1]^{n+1}) :} \nonumber \\
&& \|f\|_{\Linf} \le B_f,\ \Lip(f) \le L_f \bigr\}, \ 2\le n\le N,
\label{eq:Fnclass}
\end{eqnarray}
\begin{eqnarray}
\mathcal{F}_N^\star(B_f,L_f) &:=& \prod_{n=2}^N \mathcal{F}_n(B_f,L_f), \nonumber \\
\|\mathbf{f}_N\|_\star &:=& \max_{2\le n\le N} \|f_n\|_{\Linf},
\label{eq:Fclass}
\end{eqnarray}
\begin{eqnarray}
\mathcal{V}(R,L_u) &:=&
\bigl\{ u \in C^{0,1}([0,1]) : \nonumber \\
&& \|u\|_{\Linf(0,1)} \le R,\ \Lip(u) \le L_u \bigr\}.
\label{eq:Vclass}
\end{eqnarray}
The Lipschitz constant $\Lip$ refers throughout to the Euclidean Lipschitz seminorm on the relevant simplex or interval.
\end{definition}

\section{Theorem: Operator is Lipschitz}
\label{sec:theorem}

\begin{figure}[t]
\centering
\resizebox{\columnwidth}{!}{%
\begin{tikzpicture}[
    >={Stealth[length=2.5mm]},
    box/.style={draw, thick, minimum width=16mm, minimum height=12mm, font=\large},
    every node/.style={font=\normalsize},
]
\coordinate (fNin) at (0, 0.5);
\coordinate (uin)  at (0, -0.5);
\node[box] (QN) at (1.8, 0.5) {$\mathcal{Q}_N$};
\node[box] (UN) at (4.6, 0.5) {$\mathcal{U}_{N,k}$};
\coordinate (outR) at (8.2, 0.5);
\node[left=1mm of fNin] {$\mathbf{f}_N$};
\node[left=1mm of uin]  {$u$};
\draw[->, thick] (fNin) -- (QN.west);
\draw[->, thick] (QN.east) -- node[above] {$\mathbf{k}_N$} (UN.west);
\draw[->, thick] (uin) -- ++(4.6, 0) -- (UN.south);
\draw[->, thick] (UN.east) -- node[above] {$\mathcal{U}_N(\mathbf{f}_N, u)$} (outR);
\end{tikzpicture}%
}
\caption{Cascade of operators $\mathcal{Q}_N$ and $\mathcal{U}_{N,k}$ generates feedback control input $\mathcal{U}_N(\mathbf{f}_N, u)$.}
\label{fig:cascade}
\end{figure}

Two operations of different character connect $(\mathbf{f}_N, u)$ to $\mathcal{U}_N(\mathbf{f}_N, u)$ in Figure~\ref{fig:cascade}. The following theorem certifies that, jointly, they are Lipschitz.

\begin{theorem}[Joint Lipschitz property]
\label{thm:main}
Fix $N\ge 2$ and $B_f, L_f, R, L_u > 0$. The joint controller operator $\mathcal{U}_N$ defined in \eqref{eq:KNdef}, restricted to $\mathcal{U}_N \colon \mathcal{F}_N^\star(B_f,L_f) \times \mathcal{V}(R,L_u) \to \R$, satisfies
\begin{eqnarray}
\bigl|\mathcal{U}_N(\mathbf{f}_N,u) - \mathcal{U}_N(\tilde{\mathbf{f}}_N,\tilde u)\bigr| &\le& M_N(R)\,\|\mathbf{f}_N-\tilde{\mathbf{f}}_N\|_\star \nonumber \\
&& +\, M'_N(R)\,\|u-\tilde u\|_{\Linf(0,1)}
\label{eq:joint-Lip}
\end{eqnarray}
for all $\mathbf{f}_N,\tilde{\mathbf{f}}_N\in\mathcal{F}_N^\star(B_f,L_f)$ and all $u,\tilde u\in\mathcal{V}(R,L_u)$, where the operator Lipschitz constants are
\begin{equation}
M_N(R) := \sum_{n=2}^N \frac{R^n}{n!}\,\lambda_n, \quad M'_N(R) := \sum_{n=2}^N \frac{n R^{n-1}}{n!}\,\kappa_n,
\label{eq:M-def}
\end{equation}
with the kernel Lipschitz-in-$\mathbf{f}_N$ constants $\lambda_2 := 1$ and, for $3\le n\le N$,
\begin{equation}
\lambda_n := 1 + \sum_{m=2}^{n-1} \beta_{n,m}\bigl(B_f\,\lambda_{n-m+1} + \kappa_{n-m+1}\bigr),
\label{eq:lambda-def}
\end{equation}
the kernel sup-norm bounds $\kappa_2 := B_f$ and, for $3\le n\le N$,
\begin{equation}
\kappa_n := B_f + B_f\sum_{m=2}^{n-1} \beta_{n,m}\,\kappa_{n-m+1},
\label{eq:kappa-def}
\end{equation}
and the combinatorial constants
\begin{eqnarray}
\beta_{n,m} &:=& (n-m+1) \nonumber \\
&& \times\, \max_{1\le j\le n-m+1}\binom{n-j}{n-m+1-j} \nonumber \\
&\le& n\,2^n, \quad 2\le m\le n\le N.
\label{eq:beta-def}
\end{eqnarray}
In particular, $\mathcal{U}_N$ is locally Lipschitz on bounded sets in $\mathcal{F}_N^\star(B_f,L_f)\times\mathcal{V}(R,L_u)$.
\end{theorem}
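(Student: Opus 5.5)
The argument follows the cascade of Figure~\ref{fig:cascade}. First we bound the map $\mathcal{Q}_N:\mathbf{f}_N\mapsto\mathbf{k}_N$ in sup norm. Then we bound the multilinear evaluation $\mathcal{U}_{N,k}$, and finally combine the two with the triangle inequality
\[
|\mathcal{U}_N(\mathbf{f}_N,u)-\mathcal{U}_N(\tilde{\mathbf{f}}_N,\tilde u)|\le |\mathcal{U}_{N,k}(\mathbf{k}_N,u)-\mathcal{U}_{N,k}(\tilde{\mathbf{k}}_N,u)|+|\mathcal{U}_{N,k}(\tilde{\mathbf{k}}_N,u)-\mathcal{U}_{N,k}(\tilde{\mathbf{k}}_N,\tilde u)|.
\]
Only sup-norm information is used. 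The Lipschitz classes $\mathcal{F}_n(B_f,L_f)$ and $\mathcal{V}(R,L_u)$ serve only to make the kernels well defined and continuous, so that the sup is attained. This is why $L_f$ and $L_u$ do not enter the constants.

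\textbf{Step 1: kernel sup bounds $\|k_n\|_{\Linf}\le\kappa_n$, by strong induction on $n$.} For $n=2$, \eqref{eq:k2def} integrates $f_2$ over an interval of length $\xi_2\le 1$, which gives $\|k_2\|\le B_f=\kappa_2$.

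For $n\ge3$, we first bound the coupling operator, $\|B_n^m[k,f]\|_{\Linf}\le \beta_{n,m}\|k\|_{\Linf}\|f\|_{\Linf}$. This follows from \eqref{eq:Bdef}--\eqref{eq:Ddef}:
\begin{itemize}
\item the $j$-sum has $n-m+1$ terms;
\item each integral runs over an interval $[\xi_j,\xi_{j-1}]\subset[0,1]$ of length at most one;
\item each $D_j^{n-m+1,m}$ is a sum of $\binom{n-j}{n-m+1-j}$ evaluations of $k$.
\end{itemize}
Bounding every term by the maximum over $j$ yields exactly $\beta_{n,m}$. For the elementary estimate $\beta_{n,m}\le n2^n$, combine $n-m+1\le n$ with $\binom{a}{b}\le 2^a$.

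Inserting this into \eqref{eq:kndef}, where the outer $s$-integral is again over length at most one, and applying the induction hypothesis to $k_{n-m+1}$ (note $n-m+1<n$) gives \eqref{eq:kappa-def}.

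\textbf{Step 2: kernel Lipschitz bounds $\|k_n-\tilde k_n\|_{\Linf}\le\lambda_n\|\mathbf{f}_N-\tilde{\mathbf{f}}_N\|_\star$, by the same induction.} For $k_2$, linearity of \eqref{eq:k2def} in $f_2$ gives $\lambda_2=1$. For $n\ge3$, the difference of the integrands in \eqref{eq:kndef} splits as
\[
(f_n-\tilde f_n)\;-\;\sum_{m}\Bigl(B_n^m[k_{n-m+1}-\tilde k_{n-m+1},f_m]+B_n^m[\tilde k_{n-m+1},f_m-\tilde f_m]\Bigr),
\]
using bilinearity of $B_n^m$. We then apply the Step~1 operator bound to each piece:
\begin{itemize}
\item the first coupling term is controlled by $\beta_{n,m}B_f\lambda_{n-m+1}$, using $\|f_m\|\le B_f$ and the induction hypothesis;
\item the second is controlled by $\beta_{n,m}\kappa_{n-m+1}$, using $\|\tilde k_{n-m+1}\|\le\kappa_{n-m+1}$ from Step~1.
\end{itemize}
Together with the leading $1$ from $f_n-\tilde f_n$, this reproduces \eqref{eq:lambda-def}.

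The main obstacle is bookkeeping rather than analysis. One must confirm that the composed evaluation points $(x-\xi_n+s,\ldots,s)$, and the reordered arguments produced by $D_j$, stay in the domains where the inductive bounds hold. The sup bounds need the arguments to lie in the relevant cube or simplex, and the characteristic shifts preserve the ordering $0\le s\le\xi_{n-1}-\xi_n+s\le\cdots$. If this fails, the fallback is to extend $k_n$ by its sup bound to $[0,1]^{n+1}$, which leaves the estimates unchanged.

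\textbf{Step 3: evaluation.} The simplex $T_n(1)$ has volume $1/n!$. For the kernel term this gives $|\int (k_n-\tilde k_n)\prod u|\le \frac{R^n}{n!}\lambda_n\|\mathbf{f}_N-\tilde{\mathbf{f}}_N\|_\star$. For the state term, the telescoping identity
\[
\prod_{i=1}^n u_i-\prod_{i=1}^n\tilde u_i=\sum_{i=1}^n\Bigl(\prod_{l<i}u_l\Bigr)(u_i-\tilde u_i)\Bigl(\prod_{l>i}\tilde u_l\Bigr)
\]
yields the bound $\frac{nR^{n-1}}{n!}\kappa_n\|u-\tilde u\|_\infty$. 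Summing over $n=2,\ldots,N$ gives $M_N(R)$ and $M'_N(R)$ as in \eqref{eq:M-def}, and local Lipschitz continuity on bounded sets follows immediately.
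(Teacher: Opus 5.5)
Your proposal is correct and follows the paper's proof essentially step for step. It uses the $\beta_{n,m}$ bound on $B_n^m$ (Lemma~\ref{lem:Bbound}), the inductive sup-norm and Lipschitz-in-$\mathbf{f}_N$ kernel bounds via the bilinear splitting (Lemmas~\ref{lem:knBound} and~\ref{lem:knLip}), and the telescoping multilinear estimate over a simplex of volume $1/n!$ (Lemma~\ref{lem:single-order}), summed over $n$.
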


The proof proceeds in two stages: first, that the kernels $k_n$ themselves are bounded and Lipschitz-continuous in $\mathbf{f}_N$ in sup-norm with constants $\kappa_n$ and $\lambda_n$ (Section~\ref{sec:knLip}); second, that the operator $\mathcal{U}_N(\mathbf{f}_N, u)$, as a multilinear functional, is Lipschitz in $(k_2, \ldots, k_N, u)$ (Section~\ref{sec:UNlip}). The two stages are then composed (Section~\ref{sec:proof}).

\begin{remark}
\label{rem:beta-bound}
The closed-form bound $\beta_{n,m}\le n\,2^n$ in \eqref{eq:beta-def} follows from $\binom{n-j}{n-m+1-j}\le 2^{n-j}\le 2^{n-1}$ and $n-m+1\le n$.
\end{remark}

\begin{remark}
\label{rem:kappa-growth}
The constants $\kappa_n, \lambda_n$, and hence $M_N(R), M'_N(R)$, grow super-exponentially in $N$, so the bound \eqref{eq:joint-Lip} is far from tight. It is not meant to be: the seminorms $L_f, L_u$ are absent from it entirely, serving only to render the input sets $\mathcal{F}_n(B_f,L_f)$ and $\mathcal{V}(R,L_u)$ compact in $\Linf$ (Arzel\`a--Ascoli). The bound is qualitative --- it certifies the continuity of $\mathcal{U}_N$ that the neural-operator universal approximation theorem of Section~\ref{sec:NO} requires.
\end{remark}

\section{Proof of Theorem~\ref{thm:main}}
\label{sec:proof-thm1}

\subsection{Sup-norm Bound on $B_n^m$}
\label{sec:Bbound}

All upper bounds on the kernels $k_n$ trace back to bounds on the bilinear blocks $B_n^m$ feeding the recursion. We start there.

\begin{lemma}[Bound on $B_n^m$]
\label{lem:Bbound}
For all $k\in\Linf(T_{n-m+1}(1))$, $f\in\Linf(T_m(1))$, and $2\le m\le n$,
\begin{eqnarray}
&& \|B_n^m[k,f]\|_{\Linf(\Delta_n)} \nonumber \\
&& \le\, \beta_{n,m}\,\|k\|_{\Linf(T_{n-m+1}(1))}\,\|f\|_{\Linf(T_m(1))},
\label{eq:Bbound}
\end{eqnarray}
with $\beta_{n,m}$ as in \eqref{eq:beta-def}.
\end{lemma}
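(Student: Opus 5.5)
The plan is to bound $B_n^m[k,f]$ pointwise on $\Delta_n$ directly from its definition \eqref{eq:Bdef}, using only the triangle inequality, the sup-norm of each factor, and a count of the terms. We fix $(x,\xi_1,\ldots,\xi_n)\in\Delta_n$ with the convention $\xi_0=x$, and take absolute values inside the sum over $j$ and inside each $s$-integral.

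The first step is to check that every argument lies where the sup-norms apply. For $s\in[\xi_j,\xi_{j-1}]$, the tuple $(x,\xi_1,\ldots,\xi_{j-1},s,\xi_j,\ldots,\xi_{n-m})$ is ordered, so it lies in the domain of $k$. The permutation-summation operator $D_j^{n-m+1,m}$ only reorders the trailing entries. If those reordered points leave the simplex, we interpret $k$ through its sup over the ambient cube, or equivalently its symmetric extension; in either case we can bound the value by $\|k\|_{\Linf(T_{n-m+1}(1))}$. Similarly, $(s,\xi_{n-m+1},\ldots,\xi_n)$ has $s\ge\xi_j\ge\xi_{n-m+1}$ whenever $j\le n-m+1$, so $|f(\cdot)|\le\|f\|_{\Linf(T_m(1))}$.

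The second step bounds $D_j^{p,m}k$ with $p=n-m+1$. By \eqref{eq:Ddef} it is a sum of $\binom{p+m-1-j}{p-j}=\binom{n-j}{n-m+1-j}$ values of $k$. Hence $|D_j k|\le\binom{n-j}{n-m+1-j}\|k\|_\infty$, which is at most $\max_{1\le j\le n-m+1}\binom{n-j}{n-m+1-j}\,\|k\|_\infty$.

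The third step handles the $s$-integrals and the outer sum. Each $s$-integral runs over an interval of length $\xi_{j-1}-\xi_j\le1$. The outer sum has $n-m+1$ terms. Multiplying the bounds gives $|B_n^m[k,f]|\le (n-m+1)\max_j\binom{n-j}{n-m+1-j}\,\|k\|_\infty\|f\|_\infty=\beta_{n,m}\|k\|\|f\|$. Taking the essential sup over $\Delta_n$ yields \eqref{eq:Bbound}.

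A sharper bound is available, since the intervals $[\xi_j,\xi_{j-1}]$ have total length at most $x\le 1$, but it is not needed. The only delicate point is the domain issue in the first step, where permuted arguments may fall outside the ordered simplex. We would resolve it by a remark that the kernels are understood as extended, for example symmetrically, with the same sup-norm.
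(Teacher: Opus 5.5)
Your proof is correct and follows the paper's argument: bound pointwise, count the $\binom{n-j}{n-m+1-j}$ evaluations of $k$ in $D_j^{n-m+1,m}$, bound each $s$-integral by a factor at most $1$, and multiply by the $n-m+1$ terms of the outer sum to obtain $\beta_{n,m}$. The domain issue you flag does not arise. The tuples in $P_{p-j}$ keep the selected entries in their original non-increasing order, and $(x,\xi_1,\ldots,\xi_{j-1},s,\xi_j,\ldots)$ is ordered for $s\in[\xi_j,\xi_{j-1}]$, so $k$ is only ever evaluated on its simplex, as the paper asserts, and no extension of $k$ is needed.
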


\begin{proof}
For each $j\in\{1,\ldots,n-m+1\}$, the operator $D_j^{n-m+1,m}$ in \eqref{eq:Ddef} is, by definition, a sum over a set of cardinality $\binom{p+m-1-j}{p-j}$ with $p = n-m+1$, which equals $\binom{n-j}{n-m+1-j}$, of evaluations of $k$ at points of $T_{n-m+1}(1)$. Hence pointwise,
\begin{equation}
\bigl|\bigl[D_j^{n-m+1,m}k\bigr](\cdot)\bigr|
\le \binom{n-j}{n-m+1-j}\,\|k\|_{\Linf}.
\label{eq:Dbnd}
\end{equation}
The integral over $[\xi_j,\xi_{j-1}]\subseteq[0,1]$ contributes a factor at most $1$, and the outer summation is over $n-m+1$ values of $j$. Therefore, pointwise on $\Delta_n$,
\begin{eqnarray}
&& |B_n^m[k,f](\cdot)| \nonumber \\
&& \le (n-m+1)\,\max_{1\le j\le n-m+1}\binom{n-j}{n-m+1-j} \nonumber \\
&& \quad \times\, \|k\|_{\Linf}\,\|f\|_{\Linf} \nonumber \\
&& =\, \beta_{n,m}\,\|k\|_{\Linf}\,\|f\|_{\Linf}.
\label{eq:Bbnd-pw}
\end{eqnarray}
\end{proof}

\subsection{Lipschitz Dependence of $k_n$ on Plant Coefficients}
\label{sec:knLip}

We assemble the kernels generated from $\mathbf{f}_N$ via \eqref{eq:k2def}--\eqref{eq:kndef} into a tuple $\mathbf{k}_N = (k_2, \ldots, k_N)$, and write the kernel-generation recursion as the operator $\mathbf{k}_N = \mathcal{Q}_N(\mathbf{f}_N)$.
Lemma~\ref{lem:knBound} below establishes a sup-norm bound on $\mathcal{Q}_N$ on bounded sets, and Lemma~\ref{lem:knLip} establishes its Lipschitz continuity.

\begin{lemma}[Sup-norm bound for $k_n$]
\label{lem:knBound}
For every $\mathbf{f}_N\in\mathcal{F}_N^\star(B_f,L_f)$, the kernels $k_n$ defined by \eqref{eq:k2def}--\eqref{eq:kndef} satisfy
\begin{equation}
\|k_n\|_{\Linf(\Delta_n)} \le \kappa_n,
\qquad 2\le n\le N,
\label{eq:knBound}
\end{equation}
with $\kappa_n$ as in \eqref{eq:kappa-def}.
\end{lemma}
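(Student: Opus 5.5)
We argue by strong induction on $n$, using the recursion \eqref{eq:k2def}--\eqref{eq:kndef} together with Lemma~\ref{lem:Bbound}. The key structural observation is that each $k_n$ is the integral, along a characteristic segment of length $\xi_n\le 1$, of an integrand assembled from $f_n$ and from bilinear blocks $B_n^m[k_{n-m+1},f_m]$ involving only lower-order kernels $k_{n-m+1}$ with $n-m+1\le n-1$. Sup-norm bounds therefore propagate upward, and the $ds$-integration contributes a factor at most $1$.

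\emph{Base case $n=2$.} For $(x,\xi_1,\xi_2)\in\Delta_2$ and $s\in[0,\xi_2]$, the evaluation point $(x-\xi_2+s,\xi_1-\xi_2+s,s)$ again lies in $\Delta_2$: the ordering $0\le s\le \xi_1-\xi_2+s\le x-\xi_2+s\le x\le 1$ is preserved. Hence $|k_2|\le \xi_2\,\|f_2\|_{\Linf}\le B_f=\kappa_2$.

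\emph{Inductive step.} Fix $n\ge3$ and assume $\|k_p\|_{\Linf(\Delta_p)}\le\kappa_p$ for all $2\le p\le n-1$. The same shift argument shows that for $(x,\xi_1,\ldots,\xi_n)\in\Delta_n$ and $s\in[0,\xi_n]$, the point $(x-\xi_n+s,\ldots,\xi_{n-1}-\xi_n+s,s)$ stays in $\Delta_n$. It therefore suffices to bound the integrand in sup-norm on $\Delta_n$. By the triangle inequality and Lemma~\ref{lem:Bbound} applied with $k=k_{n-m+1}$ and $f=f_m$,
\[
\Bigl\|f_n-\sum_{m=2}^{n-1}B_n^m[k_{n-m+1},f_m]\Bigr\|_{\Linf(\Delta_n)}\le B_f+\sum_{m=2}^{n-1}\beta_{n,m}\,\kappa_{n-m+1}\,B_f .
\]
The right-hand side is exactly $\kappa_n$ in \eqref{eq:kappa-def}. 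Multiplying by $\xi_n\le1$ then gives $\|k_n\|_{\Linf(\Delta_n)}\le\kappa_n$.

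\emph{Main obstacle.} The only delicate point is domain bookkeeping. Lemma~\ref{lem:Bbound} is stated with $\|k\|_{\Linf(T_{n-m+1}(1))}$ and $\|f\|_{\Linf(T_m(1))}$, so we must check that the arguments at which $B_n^m$ evaluates $k_{n-m+1}$ and $f_m$ lie in the domains on which the induction hypothesis and $\|f_m\|_{\Linf}\le B_f$ apply. For $f_m$ this is immediate, since $f_m$ is bounded on all of $[0,1]^{m+1}$. For $k_{n-m+1}$, note that $s\in[\xi_j,\xi_{j-1}]$ is inserted between consecutive ordered variables, and that $D_j$ selects subsets in their original order. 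The evaluation arguments are therefore ordered tuples lying in $\Delta_{n-m+1}$, where the induction hypothesis holds. I would verify this ordering directly from \eqref{eq:Bdef}--\eqref{eq:Ddef}, identifying $T_{p}(1)$ with $\Delta_p$ in the sense the lemma intends. Everything else is the routine triangle inequality.
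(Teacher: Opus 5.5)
Your proof is correct and follows essentially the same route as the paper. The paper also uses strong induction: the base case is $|k_2|\le \xi_2 B_f\le B_f$, and the inductive step bounds the integrand via the triangle inequality, Lemma~\ref{lem:Bbound}, and the hypothesis $\|k_{n-m+1}\|\le\kappa_{n-m+1}$, with integration over $[0,\xi_n]\subseteq[0,1]$ contributing a factor at most one. Your added checks (that the characteristic shift keeps evaluation points in $\Delta_n$, and that $B_n^m$ evaluates $k_{n-m+1}$ at ordered tuples) make explicit domain details the paper uses tacitly.
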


\begin{proof}
For $n=2$, \eqref{eq:k2def} gives, pointwise,
\begin{equation}
|k_2(x,\xi_1,\xi_2)| \le \int_0^{\xi_2} \|f_2\|_{\Linf}\,ds \le B_f,
\label{eq:k2-pw}
\end{equation}
since $\xi_2\in[0,1]$ and $\|f_2\|_{\Linf}\le B_f$. Hence $\|k_2\|_{\Linf}\le B_f = \kappa_2$.

For the inductive step, suppose \eqref{eq:knBound} holds for $2\le n'\le n-1$. From \eqref{eq:kndef}, pointwise on $\Delta_n$, $|k_n(x,\xi_1,\ldots,\xi_n)| \le \int_0^{\xi_n} [\|f_n\|_{\Linf} + \sum_{m=2}^{n-1} \|B_n^m[k_{n-m+1},f_m]\|_{\Linf}]\,ds$. Applying Lemma~\ref{lem:Bbound} and the inductive hypothesis $\|k_{n-m+1}\|_{\Linf}\le\kappa_{n-m+1}$, together with $\|f_m\|_{\Linf}\le B_f$,
\begin{eqnarray}
&& |k_n(x,\xi_1,\ldots,\xi_n)| \nonumber \\
&& \le \int_0^{\xi_n}\Bigl[B_f + B_f\sum_{m=2}^{n-1} \beta_{n,m}\,\kappa_{n-m+1}\Bigr]\,ds \nonumber \\
&& \le B_f + B_f\sum_{m=2}^{n-1} \beta_{n,m}\,\kappa_{n-m+1}
\;=\; \kappa_n,
\label{eq:kn-pw-2}
\end{eqnarray}
using $\xi_n\in[0,1]$ and \eqref{eq:kappa-def}.
\end{proof}

Lemma~\ref{lem:knBound} controls the size of each $k_n$. We now control how $k_n$ moves when the plant coefficients $\mathbf{f}_N$ are perturbed.

\begin{lemma}[Lipschitz dependence of $k_n$]
\label{lem:knLip}
For every $\mathbf{f}_N,\tilde{\mathbf{f}}_N\in\mathcal{F}_N^\star(B_f,L_f)$, with $k_n,\tilde k_n$ the corresponding kernels defined by \eqref{eq:k2def}--\eqref{eq:kndef},
\begin{equation}
\|k_n - \tilde k_n\|_{\Linf(\Delta_n)}
\le \lambda_n\,\|\mathbf{f}_N-\tilde{\mathbf{f}}_N\|_\star,
\label{eq:knLip}
\end{equation}
for $2\le n\le N$, with $\lambda_n$ as in \eqref{eq:lambda-def}.
\end{lemma}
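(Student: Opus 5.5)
We argue by strong induction on $n$, mirroring the proof of Lemma~\ref{lem:knBound}. Throughout, write $\delta := \|\mathbf{f}_N-\tilde{\mathbf{f}}_N\|_\star$. The two sets of coefficients enter the recursion \eqref{eq:k2def}--\eqref{eq:kndef} linearly through $f_n$ and bilinearly through the blocks $B_n^m$. So the only real device needed is a telescoping split of the bilinear difference, combined with Lemma~\ref{lem:Bbound}.

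\emph{Base case $n=2$.} Subtracting the two instances of \eqref{eq:k2def}, we have pointwise on $\Delta_2$
\[
|k_2-\tilde k_2| \le \int_0^{\xi_2} |f_2-\tilde f_2|(\cdot)\,ds \le \|f_2-\tilde f_2\|_{\Linf} \le \delta ,
\]
since $\xi_2\le 1$. This gives $\lambda_2=1$. One point must be checked: the argument $(x-\xi_2+s,\ \xi_1-\xi_2+s,\ s)$ must stay in the domain on which $f_2$ is bounded. This holds because the characteristic shift preserves the ordering and keeps all coordinates in $[0,1]$, exactly as was used implicitly in Lemma~\ref{lem:knBound}.

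\emph{Inductive step $n\ge 3$.} Assume \eqref{eq:knLip} holds for all $2\le n'\le n-1$. Subtracting the two instances of \eqref{eq:kndef} bounds $|k_n-\tilde k_n|$ pointwise by
\[
\int_0^{\xi_n}\Bigl[\|f_n-\tilde f_n\|_{\Linf}+\sum_{m=2}^{n-1}\|B_n^m[k_{n-m+1},f_m]-B_n^m[\tilde k_{n-m+1},\tilde f_m]\|_{\Linf}\Bigr]ds .
\]
The key step is to use bilinearity of $B_n^m$ to telescope:
\[
B_n^m[k,f]-B_n^m[\tilde k,\tilde f]=B_n^m[k-\tilde k,\,f]+B_n^m[\tilde k,\,f-\tilde f].
\]
We then apply Lemma~\ref{lem:Bbound} to each piece. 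For the first piece we use $\|f_m\|\le B_f$ together with the inductive hypothesis $\|k_{n-m+1}-\tilde k_{n-m+1}\|\le\lambda_{n-m+1}\delta$. For the second piece we use Lemma~\ref{lem:knBound} applied to $\tilde{\mathbf{f}}_N$, which gives $\|\tilde k_{n-m+1}\|\le \kappa_{n-m+1}$, together with $\|f_m-\tilde f_m\|\le\delta$. Each summand is therefore at most $\beta_{n,m}\bigl(B_f\lambda_{n-m+1}+\kappa_{n-m+1}\bigr)\delta$. Bounding the outer integral by $1$ then yields exactly $\lambda_n\delta$ with $\lambda_n$ as in \eqref{eq:lambda-def}.

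\emph{Main obstacle.} The obstacle is bookkeeping rather than analysis. We must ensure that the sup-norms appearing in Lemma~\ref{lem:Bbound} (on $T_{n-m+1}(1)$ and $T_m(1)$) are dominated by the norms on $\Delta_{n-m+1}$ and $\Delta_m$ used in the induction. We must also check that $D_j$ and the shifted arguments evaluate the kernels only at admissible points. Both facts are the same ones already used in Lemma~\ref{lem:knBound}, so we expect to invoke them verbatim. A further point is the choice of the split: placing $f$ (not $\tilde f$) with $k-\tilde k$ and $\tilde k$ with $f-\tilde f$ is what makes the resulting constant match \eqref{eq:lambda-def}. The Lipschitz seminorm $L_f$ plays no role in this argument.
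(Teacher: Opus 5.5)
Your proposal is correct and follows the paper's proof step for step. It has the same base case giving $\lambda_2=1$, the same bilinear split $B_n^m[k-\tilde k,f]+B_n^m[\tilde k,f-\tilde f]$ bounded via Lemma~\ref{lem:Bbound}, the inductive hypothesis and Lemma~\ref{lem:knBound} applied to $\tilde{\mathbf{f}}_N$, and the bound $\xi_n\le 1$ on the outer integral. One minor correction to your closing remark: the opposite split, pairing $\tilde f$ with $k-\tilde k$ and $k$ with $f-\tilde f$, yields the same constant, since both coefficient sets satisfy the same bounds, so the choice of split is not what makes the constant match \eqref{eq:lambda-def}.
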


\begin{proof}
For $n=2$, subtract \eqref{eq:k2def} for $\tilde f_2$ from that for $f_2$, and bound pointwise:
\begin{eqnarray}
&& |k_2(x,\xi_1,\xi_2) - \tilde k_2(x,\xi_1,\xi_2)| \nonumber \\
&& \le \int_0^{\xi_2} \|f_2 - \tilde f_2\|_{\Linf}\,ds \nonumber \\
&& \le \|f_2 - \tilde f_2\|_{\Linf}
\;\le\; \|\mathbf{f}_N-\tilde{\mathbf{f}}_N\|_\star,
\label{eq:k2diff-pw}
\end{eqnarray}
so \eqref{eq:knLip} holds with $\lambda_2 = 1$.

For the inductive step, suppose \eqref{eq:knLip} holds for $2\le n'\le n-1$. The bilinearity of $B_n^m$ in its two arguments gives the splitting
\begin{eqnarray}
&& B_n^m[k_{n-m+1},f_m] - B_n^m[\tilde k_{n-m+1},\tilde f_m] \nonumber \\
&& = B_n^m[k_{n-m+1}-\tilde k_{n-m+1},\,f_m] \nonumber \\
&& \quad +\, B_n^m[\tilde k_{n-m+1},\,f_m - \tilde f_m].
\label{eq:B-bilinear-split}
\end{eqnarray}
Applying Lemma~\ref{lem:Bbound} to each term, and using $\|k_{n-m+1}-\tilde k_{n-m+1}\|_{\Linf}\le \lambda_{n-m+1}\|\mathbf{f}_N-\tilde{\mathbf{f}}_N\|_\star$ (inductive hypothesis), $\|f_m\|_{\Linf}\le B_f$, $\|\tilde k_{n-m+1}\|_{\Linf}\le\kappa_{n-m+1}$ (Lemma~\ref{lem:knBound}), and $\|f_m-\tilde f_m\|_{\Linf}\le\|\mathbf{f}_N-\tilde{\mathbf{f}}_N\|_\star$,
\begin{eqnarray}
&& \|B_n^m[k_{n-m+1},f_m] - B_n^m[\tilde k_{n-m+1},\tilde f_m]\|_{\Linf} \nonumber \\
&& \le \beta_{n,m}\bigl(B_f\,\lambda_{n-m+1} + \kappa_{n-m+1}\bigr) \nonumber \\
&& \quad \times\, \|\mathbf{f}_N-\tilde{\mathbf{f}}_N\|_\star.
\label{eq:Bdiff-bnd}
\end{eqnarray}
Subtracting the $\tilde{\mathbf{f}}_N$ version of \eqref{eq:kndef} from the $\mathbf{f}_N$ version and applying these bounds termwise,
\begin{eqnarray}
|k_n - \tilde k_n| &\le& \int_0^{\xi_n}\Bigl[\|\mathbf{f}_N-\tilde{\mathbf{f}}_N\|_\star + \sum_{m=2}^{n-1} \beta_{n,m} \nonumber \\
&& \times\bigl(B_f\,\lambda_{n-m+1} + \kappa_{n-m+1}\bigr) \nonumber \\
&& \times\, \|\mathbf{f}_N-\tilde{\mathbf{f}}_N\|_\star\Bigr]\,ds \nonumber \\
&\le& \lambda_n\,\|\mathbf{f}_N-\tilde{\mathbf{f}}_N\|_\star,
\label{eq:kndiff-pw}
\end{eqnarray}
using $\xi_n\le 1$ and \eqref{eq:lambda-def}.
\end{proof}

\subsection{Lipschitz Dependence of $\mathcal{U}_N(\mathbf{f}_N, u)$ on $(\mathbf{k}_N, u)$}
\label{sec:UNlip}

The operator $\mathcal{U}_{N,k}$ in \eqref{eq:KNdef} is a sum of $n$-fold integrals $J_n$ over $n = 2, \ldots, N$. We bound each $J_n$ separately, then sum.

\begin{lemma}[Multilinear bound]
\label{lem:single-order}
Fix $n\in\{2,\ldots,N\}$ and define
\begin{equation}
J_n[k,u] := \int_{T_n(1)} k(\xi_1,\ldots,\xi_n)\,\prod_{i=1}^n u(\xi_i)\,d\xi_n\cdots d\xi_1,
\label{eq:Jdef}
\end{equation}
for $k\in\Linf(T_n(1))$ and $u\in\Linf(0,1)$. Then for $\|k\|_{\Linf},\|\tilde k\|_{\Linf}\le K$ and $\|u\|_{\Linf},\|\tilde u\|_{\Linf}\le R$,
\begin{eqnarray}
&& |J_n[k,u] - J_n[\tilde k,\tilde u]| \nonumber \\
&& \le \frac{R^n}{n!}\,\|k-\tilde k\|_{\Linf}
+ \frac{K\,n R^{n-1}}{n!}\,\|u-\tilde u\|_{\Linf}.
\label{eq:JLip}
\end{eqnarray}
\end{lemma}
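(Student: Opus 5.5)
We prove Lemma~\ref{lem:single-order} by splitting the difference with the triangle inequality and bounding each piece using the volume of the simplex, $|T_n(1)| = 1/n!$. Write
\begin{eqnarray}
J_n[k,u] - J_n[\tilde k,\tilde u] &=& \bigl(J_n[k,u] - J_n[\tilde k,u]\bigr) \nonumber \\
&& +\, \bigl(J_n[\tilde k,u] - J_n[\tilde k,\tilde u]\bigr).
\end{eqnarray}
The first bracket equals $J_n[k-\tilde k,u]$ by linearity of $J_n$ in its kernel argument. Its integrand is bounded pointwise by $\|k-\tilde k\|_{\Linf}\,R^n$. Integrating over $T_n(1)$ gives the bound $\frac{R^n}{n!}\|k-\tilde k\|_{\Linf}$.

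For the second bracket we use a telescoping identity for products, valid pointwise on $T_n(1)$:
\begin{eqnarray}
\prod_{i=1}^n u(\xi_i) - \prod_{i=1}^n \tilde u(\xi_i) &=& \sum_{\ell=1}^n \Bigl(\prod_{i<\ell}\tilde u(\xi_i)\Bigr) \nonumber \\
&& \times\,\bigl(u(\xi_\ell)-\tilde u(\xi_\ell)\bigr) \nonumber \\
&& \times\Bigl(\prod_{i>\ell} u(\xi_i)\Bigr).
\end{eqnarray}
Each of the $n$ summands contains $n-1$ factors bounded by $R$ and one factor bounded by $\|u-\tilde u\|_{\Linf}$. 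The absolute value of the product difference is therefore at most $nR^{n-1}\|u-\tilde u\|_{\Linf}$. Multiplying by $|\tilde k|\le K$ and integrating over $T_n(1)$ gives $\frac{K\,nR^{n-1}}{n!}\|u-\tilde u\|_{\Linf}$.

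Adding the two bounds yields \eqref{eq:JLip}.

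The only point needing care is that essential-sup bounds must hold almost everywhere on $T_n(1)$ in the product variables. The map $(\xi_1,\ldots,\xi_n)\mapsto\xi_i$ pushes sets of measure zero to sets of measure zero under the preimage, so $|u(\xi_i)|\le R$ holds for a.e.\ point of the simplex. For the Lipschitz class $\mathcal{V}(R,L_u)$ this concern disappears anyway, since those functions are continuous. Computing the simplex volume $1/n!$ is routine.
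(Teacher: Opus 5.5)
Your proof is correct and follows the paper's argument essentially verbatim: the same split into $J_n[k-\tilde k,u]$ plus a tensor-difference term, the same telescoping identity (you order the $u$ and $\tilde u$ factors the other way round, which is equally valid), and the same simplex-volume factor $1/n!$. Your remark that the essential-sup bounds hold almost everywhere on $T_n(1)$ is a harmless refinement that the paper leaves implicit.
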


\begin{proof}
The simplex volume is $|T_n(1)| = 1/n!$. Write
\begin{eqnarray}
&& J_n[k,u] - J_n[\tilde k,\tilde u] \nonumber \\
&& = J_n[k-\tilde k,u] + J_n[\tilde k,\,u^{\otimes n} - \tilde u^{\otimes n}],
\label{eq:Jdiff-split}
\end{eqnarray}
where $u^{\otimes n}(\xi) := \prod_i u(\xi_i)$ and similarly for $\tilde u$. The first term satisfies $|J_n[k-\tilde k,u]| \le \|k-\tilde k\|_{\Linf}\,R^n / n!$. For the second, the multilinear telescoping identity
\begin{eqnarray}
\prod_{i=1}^n u(\xi_i) - \prod_{i=1}^n \tilde u(\xi_i) &=& \sum_{j=1}^n \Bigl[\prod_{i<j} u(\xi_i)\Bigr]\bigl(u(\xi_j) - \tilde u(\xi_j)\bigr) \nonumber \\
&& \times\, \Bigl[\prod_{i>j}\tilde u(\xi_i)\Bigr]
\label{eq:telescope}
\end{eqnarray}
yields, on $T_n(1)$ with $\|u\|_{\Linf},\|\tilde u\|_{\Linf}\le R$,
\begin{equation}
\bigl|u^{\otimes n}(\xi) - \tilde u^{\otimes n}(\xi)\bigr|
\le n R^{n-1}\,\|u-\tilde u\|_{\Linf},
\label{eq:tensor-diff}
\end{equation}
hence $|J_n[\tilde k,\,u^{\otimes n} - \tilde u^{\otimes n}]| \le K\,n R^{n-1}\,\|u-\tilde u\|_{\Linf} / n!$. Adding gives \eqref{eq:JLip}.
\end{proof}

\subsection{Completing the proof of Theorem~\ref{thm:main}}
\label{sec:proof}

Fix $\mathbf{f}_N,\tilde{\mathbf{f}}_N\in\mathcal{F}_N^\star(B_f,L_f)$ and $u,\tilde u\in\mathcal{V}(R,L_u)$, and let $k_n,\tilde k_n$ be the kernels generated from $\mathbf{f}_N$ and $\tilde{\mathbf{f}}_N$ via \eqref{eq:k2def}--\eqref{eq:kndef}. By Lemma~\ref{lem:knBound}, $\|k_n\|_{\Linf(\Delta_n)},\|\tilde k_n\|_{\Linf(\Delta_n)}\le\kappa_n$ for $2\le n\le N$. By Lemma~\ref{lem:knLip}, $\|k_n-\tilde k_n\|_{\Linf(\Delta_n)}\le\lambda_n\,\|\mathbf{f}_N-\tilde{\mathbf{f}}_N\|_\star$. Since the boundary traces $k_n(1,\cdot)$ are restrictions of $k_n$ to the slice $\{x = 1\}$ of $\Delta_n$, the same bounds hold with $\Linf(T_n(1))$ in place of $\Linf(\Delta_n)$.

Applying Lemma~\ref{lem:single-order} to each summand of \eqref{eq:KNdef} with $K=\kappa_n$,
\begin{eqnarray}
|\mathcal{U}_N(\mathbf{f}_N,u) - \mathcal{U}_N(\tilde{\mathbf{f}}_N,\tilde u)| &\le& \sum_{n=2}^N\Bigl[\frac{R^n}{n!}\,\|k_n-\tilde k_n\|_{\Linf} \nonumber \\
&& +\, \frac{\kappa_n\,n R^{n-1}}{n!}\,\|u-\tilde u\|_{\Linf}\Bigr].
\label{eq:UN-diff-bnd}
\end{eqnarray}
Substituting $\|k_n-\tilde k_n\|_{\Linf}\le\lambda_n\,\|\mathbf{f}_N-\tilde{\mathbf{f}}_N\|_\star$ and using \eqref{eq:M-def} yields \eqref{eq:joint-Lip}. Local Lipschitz continuity on bounded sets is immediate.

\section{Neural Operator Approximation}
\label{sec:NO}

\begin{figure*}[t]
\centering
\resizebox{0.81\linewidth}{!}{%
\begin{tikzpicture}[
    >={Stealth[length=2.5mm]},
    innerbox/.style={draw, thick, minimum width=34mm, minimum height=14mm, font=\normalsize, align=center},
    every node/.style={font=\normalsize},
]
\node[innerbox] (PDE) at (3.0, 0) {kernel PDE\\ solver};
\node[innerbox] (NI)  at (7.5, 0) {nested integration\\ and summation};
\draw[->, thick] (PDE.east) -- (NI.west);
\begin{pgfonlayer}{background}
  \node[draw, dashed, thick, rounded corners,
        inner xsep=3mm, inner ysep=6mm,
        fit=(PDE)(NI)] (NN) {};
\end{pgfonlayer}
\coordinate (inAnchor) at ($(NN.west) + (-1.6, 0)$);
\node[left=0pt of inAnchor] (in) {$(\mathbf{f}_N, u)$};
\coordinate (outAnchor) at ($(NN.east) + (3.4, 0)$);
\draw[->, thick] (in.east) -- (NN.west);
\draw[->, thick] (NN.east) -- node[above] {$\mathcal{U}_N(\mathbf{f}_N, u)$} (outAnchor);
\node[below=2mm of NN.south, font=\large] {used to train NN $\widehat{\mathcal{U}}_N$};
\end{tikzpicture}%
}
\caption{Neural operator $\widehat{\mathcal{U}}_N$ subsumes the cascade of Figure~\ref{fig:cascade} --- kernel PDE solution and nested integration --- into a single trained mapping from $(\mathbf{f}_N, u)$ to $\mathcal{U}_N(\mathbf{f}_N, u)$.}
\label{fig:NO-cascade}
\end{figure*}

The neural-operator universal approximation theorem~\cite{Lu2021Universal}, applied to operators continuous on compact sets, ensures the existence of a neural-operator surrogate $\widehat{\mathcal{U}}_N$ for $\mathcal{U}_N$ within any prescribed accuracy (Figure~\ref{fig:NO-cascade}).

\begin{theorem}[Neural-operator surrogate]
\label{thm:NO}
For every $\eps > 0$, there exists a neural operator $\widehat{\mathcal{U}}_N : \mathcal{F}_N^\star(B_f,L_f) \times \mathcal{V}(R,L_u) \to \R$ such that
\begin{eqnarray}
&& \bigl|\mathcal{U}_N(\mathbf{f}_N, u) - \widehat{\mathcal{U}}_N(\mathbf{f}_N, u)\bigr| \le \eps, \nonumber \\
&& \forall\,(\mathbf{f}_N, u) \in \mathcal{F}_N^\star(B_f,L_f) \times \mathcal{V}(R,L_u).
\label{eq:NObnd}
\end{eqnarray}
\end{theorem}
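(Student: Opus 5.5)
The plan is to invoke the universal approximation theorem for neural operators~\cite{Lu2021Universal}, which yields, for any operator that is continuous on a compact subset of a Banach space (or a product of such spaces) and for any tolerance, a network whose uniform error on that compact set is below the tolerance. Two hypotheses must therefore be verified: compactness of the input domain $\mathcal{K} := \mathcal{F}_N^\star(B_f,L_f)\times\mathcal{V}(R,L_u)$ and continuity of $\mathcal{U}_N$ on $\mathcal{K}$.

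\emph{Compactness.} Each $\mathcal{F}_n(B_f,L_f)$ is a uniformly bounded family of functions on the compact set $[0,1]^{n+1}$. It is also equicontinuous, since every member has Lipschitz constant at most $L_f$. By Arzel\`a--Ascoli it is therefore relatively compact in $C([0,1]^{n+1})$ with the sup norm. It is also closed: a uniform limit of functions bounded by $B_f$ with Lipschitz constant at most $L_f$ keeps both bounds, because the Lipschitz inequality passes to pointwise limits. Hence $\mathcal{F}_n(B_f,L_f)$ is compact, and the same argument shows that $\mathcal{V}(R,L_u)$ is compact in $C([0,1])$. A finite product of compact sets is compact by Tychonoff. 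The product topology coincides with the topology of the norm $\|\mathbf{f}_N\|_\star+\|u\|_{\Linf}$ used in Theorem~\ref{thm:main}, so $\mathcal{K}$ is compact in exactly the metric in which Lipschitz continuity was established. On continuous functions the $\Linf$ norm and the sup norm agree, so nothing is lost.

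\emph{Continuity and approximation.} Theorem~\ref{thm:main} gives the bound \eqref{eq:joint-Lip} on all of $\mathcal{K}$ with finite constants $M_N(R)$ and $M'_N(R)$. Hence $\mathcal{U}_N$ is Lipschitz, in particular uniformly continuous, on $\mathcal{K}$. Applying the universal approximation theorem, in its multi-input form for operators on products of function spaces (e.g.\ a DeepONet/MIONet-type architecture with branch nets reading sensor values of $f_2,\ldots,f_N$ and $u$), gives a network $\widehat{\mathcal{U}}_N$ satisfying \eqref{eq:NObnd}.

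\emph{Main obstacle.} The delicate step is matching the precise hypotheses of the cited theorem. Many versions are stated for a single input function on one domain, whereas here there are $N$ inputs on domains of different dimensions $[0,1]^{n+1}$. I would handle this in one of two ways. The first is to cite a multi-input version. The second is to reduce to the single-input case by a continuous embedding of $\mathcal{K}$ into one space: set $\Phi(\mathbf{f}_N,u) := (f_2,\ldots,f_N,u)$, extend each component constantly in the missing variables to a common cube $[0,1]^{N+1}$, and stack the components as a vector-valued continuous function. $\Phi$ is an isometric injection for the max norm, so its image is compact. Then $\mathcal{U}_N\circ\Phi^{-1}$ is continuous on that compact image, and the universal approximation theorem applies to it.

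If the theorem additionally requires the operator to be defined and continuous on the whole space rather than on a compact set, I would extend $\mathcal{U}_N\circ\Phi^{-1}$ by the Dugundji (or Tietze) extension theorem; the extension changes nothing on $\mathcal{K}$. Finally, composing the resulting network with $\Phi$, which is just a sampling of the inputs, defines $\widehat{\mathcal{U}}_N$ on $\mathcal{K}$ with the required $\eps$-accuracy.
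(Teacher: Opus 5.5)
Your proposal is correct and follows the paper's proof: Theorem~\ref{thm:main} gives Lipschitz continuity, Arzel\`a--Ascoli gives compactness of $\mathcal{F}_N^\star(B_f,L_f)\times\mathcal{V}(R,L_u)$, and the universal approximation theorem of~\cite{Lu2021Universal} finishes the argument. Your extra steps fill in details the paper's three-line proof leaves implicit: checking closedness, so that relative compactness becomes compactness, and reducing the $N$ inputs on cubes of different dimensions to a single input (or citing a MIONet-type multi-input version).
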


\begin{proof}
By Theorem~\ref{thm:main}, $\mathcal{U}_N$ is Lipschitz on $\mathcal{F}_N^\star(B_f,L_f)\times\mathcal{V}(R,L_u)$, hence continuous; the input set is compact in $\Linf$ by Arzel\`a--Ascoli. The result then follows from the universal approximation theorem of~\cite{Lu2021Universal}.
\end{proof}

\section{Closed Loop and Target System}
\label{sec:closedloop}

We close the loop on the hyperbolic plant \eqref{eq:plant}--\eqref{eq:Fdef} with the neural-operator feedback $U(t) = \widehat{\mathcal{U}}_N(\mathbf{f}_N, u(\cdot, t))$. The exact backstepping transformation $w = \mathcal{T}[u] = u - K[u]$ from~\cite{krstic2026feedbacklinearizationhyperbolicpdes} maps the closed-loop plant into a transport target system. The interior equation $w_t = w_x$
holds. At the boundary $x = 1$, writing $u = u(\cdot,t)$, the feedback gives $w(1,t) = u(1,t) - \mathcal{U}(\mathbf{f}_N, u) = \widehat{\mathcal{U}}_N(\mathbf{f}_N, u) - \mathcal{U}(\mathbf{f}_N, u)$. Adding and subtracting $\mathcal{U}_N(\mathbf{f}_N, u)$ decomposes this into the neural-operator error and the truncation residual:
\begin{eqnarray}
b(t) = w(1,t) &=& \bigl[\widehat{\mathcal{U}}_N(\mathbf{f}_N, u) - \mathcal{U}_N(\mathbf{f}_N, u)\bigr] \nonumber \\
&& +\, \bigl[\mathcal{U}_N(\mathbf{f}_N, u) - \mathcal{U}(\mathbf{f}_N, u)\bigr].
\label{eq:bdef-NO}
\end{eqnarray}
Thus $w$ satisfies the linear transport equation $w_t = w_x$ on $(0,1)$ with boundary residual $b(t)$ and initial condition $w_0 = \mathcal{T}[u_0]$. The first bracket in \eqref{eq:bdef-NO} is the neural-operator approximation error from Theorem~\ref{thm:NO}; the second bracket is the truncation tail bounded by $\eps_N(\|u(\cdot,t)\|_\infty)$, established in~\cite{krstic2026feedbacklinearizationhyperbolicpdes-trunc}.

\section{Practical Stability under Neural-Operator Feedback}
\label{sec:stability}

The closed-loop analysis follows the structure of~\cite{krstic2026feedbacklinearizationhyperbolicpdes-trunc}, with the boundary residual $b(t)$ now bounded by the sum of the neural-operator error $\eps$ and the truncation tail $\eps_N(r_0)$. We use, without restatement, the following directly importable results from~\cite{krstic2026feedbacklinearizationhyperbolicpdes-trunc}:
\begin{itemize}[leftmargin=1.2em]
\item the local sup-norm Lipschitz constant of the controller kernel (Lemma~5 of~\cite{krstic2026feedbacklinearizationhyperbolicpdes-trunc}),
\begin{equation}
\Lipk(r) := D_K\,e^{\Upsilon_K}\,\frac{C_K r\,(2 - C_K r)}{(1 - C_K r)^2},
\label{eq:LipK-imp}
\end{equation}
defined for $r \in [0, 1/C_K)$;
\item the sup-norm transport propagation lemma for the target system $w_t = w_x$ (Lemma~7 of~\cite{krstic2026feedbacklinearizationhyperbolicpdes-trunc}): $\|w(\cdot,t)\|_\infty \le \max(\|w_0\|_\infty,\, \sup_{0\le\tau\le t}|b(\tau)|)$ for $0 \le t \le 1$, and $\|w(\cdot,t)\|_\infty \le \sup_{t-1 \le \tau \le t}|b(\tau)|$ for $t \ge 1$ (this estimate concerns $w$ alone and is independent of the invertibility of $\mathcal{T}$);
\item the conversion $\|u(\cdot,t)\|_\infty \le \frac{1}{1-\Lipk(r_0)}\,\|w(\cdot,t)\|_\infty$, valid for $\|w(\cdot,t)\|_\infty < \rho_w(r_0) := r_0(1 - \Lipk(r_0))$ (Corollary~1 of~\cite{krstic2026feedbacklinearizationhyperbolicpdes-trunc});
\item the closed-loop well-posedness construction (Lemma~8 of~\cite{krstic2026feedbacklinearizationhyperbolicpdes-trunc}), invoked once in the proof to deliver the unique classical solution.
\end{itemize}

\begin{figure*}[t]
\centering
\resizebox{0.81\linewidth}{!}{%
\begin{tikzpicture}[
    >={Stealth[length=2.5mm]},
    every node/.style={font=\normalsize},
]
\def\xmax{12.5}
\def\ymax{4.7}
\def\rhoZeroY{4.1}
\def\gammaY{1.1}
\def\uZeroY{3.6}
\draw[->, thick] (0,0) -- (\xmax,0) node[below right] {$t$};
\draw[->, thick] (0,0) -- (0,\ymax) node[left] {$\|u(\cdot,t)\|_\infty$};
\draw[densely dashed, thick] (0,\gammaY) -- (\xmax-0.4,\gammaY);
\draw[thick] (-0.12,\rhoZeroY) -- (0.12,\rhoZeroY);
\node[anchor=east] at (-0.18,\rhoZeroY) {$\rho_u^0(r_0)$};
\draw[thick] (-0.12,\uZeroY) -- (0.12,\uZeroY);
\node[anchor=east] at (-0.18,\uZeroY) {$\|u_0\|_\infty$};
\draw[thick] (-0.12,\gammaY) -- (0.12,\gammaY);
\node[anchor=east] at (-0.18,\gammaY) {$\gamma(\varepsilon)$};
\draw[very thick, blue!70!black, smooth] plot coordinates {
    (0.00, 3.60) (0.45, 4.00) (0.90, 3.10) (1.10, 3.55) (1.85, 2.20)
    (2.20, 3.05) (2.55, 2.45) (3.20, 1.55) (3.55, 2.30) (3.85, 1.80)
    (4.45, 1.30) (4.95, 1.85) (5.20, 1.60) (5.85, 1.15) (6.55, 1.55)
    (6.85, 1.25) (7.55, 1.40) (8.10, 1.13) (8.95, 1.30) (9.70, 1.16)
    (11.50, 1.14)
};
\node[align=center, anchor=west, font=\small] at (5.0, 4.0)
    {as the state shrinks, truncation residual $\mathcal{U}_N - \mathcal{U}$,\\
     bounded by $\varepsilon_N(\|u(\cdot,t)\|_\infty)$, shrinks};
\draw[->, thick, gray]
    (4.95, 3.85) to[out=-150, in=70] (2.7, 2.85);
\node[anchor=center, font=\small] at ({\xmax/2 + 0.4}, 0.45)
    {state error persists as neural-operator error $\widehat{\mathcal{U}}_N - \mathcal{U}_N$, bounded by $\varepsilon$, persists};
\draw[decorate, decoration={brace,amplitude=4pt}, thick]
    (\xmax-0.3, \uZeroY) -- (\xmax-0.3, \gammaY);
\node[anchor=west, align=left] at (\xmax-0.2, {(\uZeroY+\gammaY)/2})
    {$\beta(\|u_0\|_\infty, t)$};
\end{tikzpicture}%
}
\caption{Illustration of the practical stability result in part (iv) of Theorem~\ref{thm:stab}, with a linear-in-$\varepsilon$ class-$\mathcal{K}$ residual bound $\gamma(\varepsilon)$. The state norm settles to a residual $\Linf$ ball whose radius is proportional to the neural-operator approximation accuracy $\varepsilon$.}
\label{fig:trajectory}
\end{figure*}

\begin{theorem}[Practical stability]
\label{thm:stab}
Fix $r_0 \in (0, \bar r)$ and $N \ge 2$. Let $\eps > 0$ be the neural-operator accuracy of \eqref{eq:NObnd}, taken on the input class $\mathcal{F}_N^\star(B_f,L_f) \times \mathcal{V}(r_0,L_u)$. Assume the small-gain and basin conditions
\begin{eqnarray}
\Lipk(r_0) &<& 1, \nonumber \\
r_0 &<& \bar r \;:=\; \min(r^*, \rho_f), \nonumber \\
r^* &:=& \Lipk^{-1}(1),
\label{eq:cond_Lipk}
\end{eqnarray}
the local Lipschitz bound on the neural-operator feedback
\begin{eqnarray}
\bigl|\widehat{\mathcal{U}}_N(\mathbf{f}_N, u) - \widehat{\mathcal{U}}_N(\mathbf{f}_N, \tilde u)\bigr| &\le& \hat L_U\,\|u - \tilde u\|_{\Linf(0,1)}, \nonumber \\
&& \forall\, u, \tilde u \in \mathcal{V}(r_0, L_u),
\label{eq:NOlip}
\end{eqnarray}
with $\hat L_U < 1$, the residual condition
\begin{equation}
\eps + \eps_N(r_0) < \rho_w(r_0) := r_0\,(1 - \Lipk(r_0)),
\label{eq:cond_resid_NO}
\end{equation}
and the trajectory-Lipschitz property
\begin{equation}
\Lip\bigl(u(\cdot, t)\bigr) \le L_u, \quad t \ge 0.
\label{eq:traj-Lip}
\end{equation}
Let $u_0 \in \Linf(0,1)$ satisfy
\begin{equation}
\|u_0\|_\infty \;<\; \rho_u^0(r_0) := r_0\,\frac{1 - \Lipk(r_0)}{1 + \Lipk(r_0)}.
\label{eq:cond_init_NO}
\end{equation}
Then the closed-loop system has a unique classical solution $u \in C([0,\infty); \Linf(0,1))$, with the characteristic representation
\begin{eqnarray}
u(x,t) &=& u_0(x+t) \nonumber \\
&& +\, \int_0^t F[u(\cdot,s)](x+t-s)\, ds, \quad x + t \le 1,
\label{eq:char-mild-a}
\end{eqnarray}
\begin{eqnarray}
u(x,t) &=& \widehat{\mathcal{U}}_N\bigl(\mathbf{f}_N, u(\cdot, t-(1-x))\bigr) \nonumber \\
&& +\, \int_{t-(1-x)}^t F[u(\cdot,s)](x+t-s)\, ds, \ x + t > 1,
\label{eq:char-mild-b}
\end{eqnarray}
satisfying the following bounds for every $t \ge 0$.
\begin{itemize}
\item[(i)] \emph{(Forward invariance.)} $\|u(\cdot,t)\|_\infty < r_0$.
\item[(ii)] \emph{(Practical stability.)}
\begin{eqnarray}
\|u(\cdot,t)\|_\infty &\le& \max\biggl(\frac{1+\Lipk(r_0)}{1-\Lipk(r_0)}\,\|u_0\|_\infty, \nonumber \\
&& \qquad\ \, \frac{\eps + \eps_N(r_0)}{1-\Lipk(r_0)}\biggr).
\label{eq:stab-ii}
\end{eqnarray}
\item[(iii)] \emph{(Practical attractivity.)} For all $t \ge 1$,
\begin{equation}
\|u(\cdot,t)\|_\infty \;\le\; \frac{\eps + \eps_N(r_0)}{1-\Lipk(r_0)}.
\label{eq:stab-iii}
\end{equation}
\item[(iv)] \emph{($\eps$-practical asymptotic stability.)} If the initial condition satisfies
\begin{eqnarray}
&& \|u_0\|_\infty \nonumber \\
&& < \rho_u^0(r_0)\biggl(1 - \frac{\eps}{r_0(1-\Lipk(r_0)) - \eps_N(r_0)}\biggr),
\label{eq:cond_init_NO_sharp}
\end{eqnarray}
which is consistent with \eqref{eq:cond_init_NO} at $\eps = 0$ and tightens linearly as $\eps$ grows, then the closed-loop trajectory satisfies
\begin{eqnarray}
\|u(\cdot,t)\|_\infty &\le& \frac{1 + \Lipk(r_0)}{1-\Lipk(r_0)}\,q_0(r_0)^{\max(t-1,\, 0)} \nonumber \\
&& \times\, \|u_0\|_\infty + \gamma(\eps), \quad t \ge 0,
\label{eq:stab-iv}
\end{eqnarray}
where the truncation-residual contraction rate $q_0(r_0)$, monotone increasing in $r_0$, is
\begin{eqnarray}
q_0(r_0) &:=& \frac{\eps_N(r_0)}{r_0\,(1 - \Lipk(r_0))} \nonumber \\
&=& \sup_{0 < s \le r_0}\frac{\eps_N(s)}{s\,(1 - \Lipk(r_0))} \;<\; 1,
\label{eq:q0-def}
\end{eqnarray}
the supremum achieved at the endpoint $s = r_0$ by monotonicity of $\eps_N(s)/s$ on $(0, r_0]$ and the inequality $q_0(r_0) < 1$ following from \eqref{eq:cond_resid_NO}, and the residual envelope is
\begin{equation}
\gamma(\eps) \;:=\; \frac{1}{1 - \Lipk(r_0) - \eps_N(r_0)/r_0}\,\eps.
\label{eq:gamma-def}
\end{equation}
\end{itemize}
\end{theorem}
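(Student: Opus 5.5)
The plan is to run a bootstrap (continuation) argument on the target transport system and then transfer the bounds to $u$. The transfer uses the conversion $\|u\|_\infty\le\|w\|_\infty/(1-\Lipk(r_0))$ and its forward counterpart $\|w\|_\infty\le(1+\Lipk(r_0))\|u\|_\infty$.

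\textbf{Existence.} Existence and uniqueness of the classical solution, with the characteristic representation \eqref{eq:char-mild-a}--\eqref{eq:char-mild-b}, come from Lemma~8 of~\cite{krstic2026feedbacklinearizationhyperbolicpdes-trunc}. In that construction the exact truncated feedback is replaced by $\widehat{\mathcal{U}}_N$. The contraction step in the fixed-point argument uses \eqref{eq:NOlip} with $\hat L_U<1$, together with the Lipschitz bound on $F$ on the $r_0$-ball, which holds since $r_0<\rho_f$ under Assumption~\ref{ass:fnbound}. Local solutions are then continued as long as $\|u(\cdot,t)\|_\infty<r_0$. The trajectory-Lipschitz property \eqref{eq:traj-Lip} keeps $u(\cdot,t)$ inside $\mathcal{V}(r_0,L_u)$, so Theorem~\ref{thm:NO} applies along the trajectory.

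\textbf{Forward invariance and (ii)--(iii).} Let $t^\ast=\sup\{t:\|u(\cdot,s)\|_\infty<r_0\ \forall s\le t\}$. On $[0,t^\ast)$, the decomposition \eqref{eq:bdef-NO}, Theorem~\ref{thm:NO} and \eqref{eq:epsN-def} give $|b(t)|\le\eps+\eps_N(r_0)$. The initial target state satisfies $\|w_0\|_\infty\le(1+\Lipk(r_0))\|u_0\|_\infty<\rho_w(r_0)$ by \eqref{eq:cond_init_NO}. The transport lemma then gives $\|w(\cdot,t)\|_\infty\le\max(\|w_0\|_\infty,\eps+\eps_N(r_0))<\rho_w(r_0)$, using \eqref{eq:cond_resid_NO}. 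The conversion yields $\|u(\cdot,t)\|_\infty<r_0$ with strict margin, and by continuity in $C([0,\infty);\Linf)$ this forces $t^\ast=\infty$. This proves (i) and (ii). For $t\ge1$ only the boundary term survives in the transport lemma, which gives (iii).

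\textbf{Part (iv): the main obstacle.} The difficulty is getting geometric decay from the state-dependent tail rather than the frozen bound $\eps_N(r_0)$. I would use monotonicity of $\eps_N(s)/s$, which gives $\eps_N(s)\le q_0(r_0)(1-\Lipk(r_0))\,s$ for $s\le r_0$. Set $a_k:=\sup_{t\in[k,k+1]}\|u(\cdot,t)\|_\infty$. For $t\ge k+1$, the transport lemma and the conversion give
\[
a_{k+1}\le\frac{\eps+\eps_N(a_k)}{1-\Lipk(r_0)}\le q_0(r_0)\,a_k+\frac{\eps}{1-\Lipk(r_0)}.
\]
Iterating and summing the geometric series bounds the residual by $\eps/\bigl((1-\Lipk(r_0))(1-q_0(r_0))\bigr)$. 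Since $(1-\Lipk(r_0))(1-q_0(r_0))=1-\Lipk(r_0)-\eps_N(r_0)/r_0$, this equals $\gamma(\eps)$ exactly. The initial block $a_0\le\frac{1+\Lipk(r_0)}{1-\Lipk(r_0)}\|u_0\|_\infty$ follows from (ii), once the sharpened condition \eqref{eq:cond_init_NO_sharp} makes the first entry of the max dominate the $\eps$-term. The same condition should guarantee that the iterates satisfy $a_k\le r_0$, so that the sublinear tail bound is legitimate at every step.

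Finally, I would pass from block-wise to continuous time through the exponent $\max(t-1,0)$, absorbing the floor function into the bound via $q_0^{\lfloor t\rfloor}\le q_0^{t-1}$. The most delicate bookkeeping is checking that \eqref{eq:cond_init_NO_sharp} is precisely what keeps the recursion inside the ball $r_0$.
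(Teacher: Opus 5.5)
Your well-posedness argument and your proofs of (i)--(iii) coincide with the paper's. The gap is in (iv), at the initial block. Write $\alpha(s):=\frac{1+\Lipk(r_0)}{1-\Lipk(r_0)}\,s$. You claim that $a_0\le\alpha(\|u_0\|_\infty)$ follows from (ii), because \eqref{eq:cond_init_NO_sharp} makes the first entry of the maximum dominate. But \eqref{eq:cond_init_NO_sharp} is an \emph{upper} bound on $\|u_0\|_\infty$. Dominance of the first entry would need the \emph{lower} bound $(1+\Lipk(r_0))\|u_0\|_\infty\ge\eps+\eps_N(r_0)$. The claim is in fact false. Take $u_0=0$, which satisfies \eqref{eq:cond_init_NO_sharp}. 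The surrogate is only $\eps$-accurate, so generically $\widehat{\mathcal{U}}_N(\mathbf{f}_N,0)\neq 0$ even though $\mathcal{U}_N(\mathbf{f}_N,0)=0$. Then \eqref{eq:char-mild-b} pushes the state off zero for small $t>0$, so $a_0>0=\alpha(0)$.

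Falling back on the full bound (ii) does not help either. Its second entry equals $q_0(r_0)\,r_0+(1-q_0(r_0))\,\gamma(\eps)$, and since $\gamma(\eps)<r_0$ this exceeds $\alpha(\|u_0\|_\infty)+\gamma(\eps)$ for small $\|u_0\|_\infty$, which destroys the form of \eqref{eq:stab-iv}. What your geometric-series iteration actually needs is only $a_0\le\alpha(\|u_0\|_\infty)+\gamma(\eps)$. The paper obtains this by redoing the first-window transport bound with the \emph{state-dependent} tail. With $Y_0:=\sup_{t\ge 0}\|u(\cdot,t)\|_\infty\le r_0$ one has $|b(\tau)|\le\eps+\eps_N(Y_0)$, hence $Y_0\le\max\bigl(\alpha(\|u_0\|_\infty),\,q_0(r_0)Y_0+(1-q_0(r_0))\gamma(\eps)\bigr)$. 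In the second case, solving for $Y_0$ gives $Y_0\le\gamma(\eps)$. So $Y_0\le\max(\alpha(\|u_0\|_\infty),\gamma(\eps))$.

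Your block recursion has a related defect. For $t\in[k+1,k+2]$ the transport window $[t-1,t]$ reaches into $[k+1,k+2]$ itself. The block suprema therefore only give $a_{k+1}\le(\eps+\eps_N(\max(a_k,a_{k+1})))/(1-\Lipk(r_0))$, not the bound with $\eps_N(a_k)$, and the argument is circular as written. The paper avoids this with the nonincreasing tail suprema $Y_k:=\sup_{t\ge k}\|u(\cdot,t)\|_\infty$, for which $[t-1,t]\subseteq[k,\infty)$ whenever $t\ge k+1$. With your blocks, the same two-case trick repairs it: if $a_{k+1}>a_k$, then $a_{k+1}\le\gamma(\eps)$. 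This gives $\max(a_{k+1}-\gamma(\eps),0)\le q_0(r_0)\max(a_k-\gamma(\eps),0)$.

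Finally, the point you flag as most delicate is not one. That $a_k\le r_0$ for every $k$ is just (i), which needs only \eqref{eq:cond_init_NO}; the paper's proof of (iv) never invokes \eqref{eq:cond_init_NO_sharp}. The genuinely delicate step is the initial excess, and that is exactly where your argument breaks.
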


\begin{remark}
\label{rem:small-gain}
The hypothesis $\hat L_U < 1$ in \eqref{eq:NOlip} is natural: the Volterra feedback $\mathcal{U}_N$ has no linear term (the series begins at $n=2$), so its Lipschitz constant vanishes at the origin and grows continuously with $r_0$, mirroring the basin-induced bound $\Lip_{K_N}(r_0) \le \Lip_K(r_0) < 1$ of~\cite{krstic2026feedbacklinearizationhyperbolicpdes-trunc}. The surrogate inherits this small-gain behavior when trained on data from a small basin where $\mathcal{U}_N$ has small slope, optionally with Lipschitz-constrained layers.
\end{remark}

\begin{remark}
\label{rmk:residual-eps-N}
The residual $\eps + \eps_N(r_0)$ in (ii)--(iii) has two independent origins: the network error $\eps$, controlled by training, and the truncation tail $\eps_N(r_0)$, controlled by the order $N$. The envelope $\gamma(\eps)$ in (iv) scales linearly with $\eps$, with $N$ entering only through $\eps_N(r_0)$; as $\eps \to 0$ the $\mathcal{KL}$ stability of~\cite{krstic2026feedbacklinearizationhyperbolicpdes-trunc} is recovered.
\end{remark}

\section{Proof of Theorem~\ref{thm:stab}}
\label{sec:proof-stab}

\emph{Local well-posedness.} By \eqref{eq:cond_init_NO} and $\Lipk(r_0)<1$, $\|u_0\|_\infty < \rho_u^0(r_0) < r_0$. The construction of Lemma~8 of~\cite{krstic2026feedbacklinearizationhyperbolicpdes-trunc} applies with the truncated boundary feedback $\mathcal{U}_N(\mathbf{f}_N, u(\cdot,t))$ replaced by $\widehat{\mathcal{U}}_N(\mathbf{f}_N, u(\cdot,t))$ and with the kernel-feedback Lipschitz constant $\Lip_{K_N}(r_0)$ replaced by $\hat L_U$. The contraction estimate on a time interval of length $\tau$ becomes $\hat L_U + \tau\, L_F(r_0) < 1$, which is met for $\tau < (1 - \hat L_U)/L_F(r_0)$ since $\hat L_U < 1$ by \eqref{eq:NOlip}, yielding the Banach fixed point on $C([0, \tau]; \Linf(0,1))$. Concatenation of intervals furnishes a unique classical solution on a maximal interval $[0, T_{\max})$ satisfying \eqref{eq:char-mild-a}--\eqref{eq:char-mild-b}, with the blow-up alternative
\begin{eqnarray}
T_{\max} < \infty &\Rightarrow& \limsup_{t \uparrow T_{\max}}\|u(\cdot,t)\|_\infty \ge r_0.
\label{eq:blowup-alt}
\end{eqnarray}

\emph{A priori residual bound on $[0, T_{\max})$.} On $[0, T_{\max})$ the solution satisfies $\|u(\cdot,t)\|_\infty < r_0$ by definition of $T_{\max}$ together with the continuation criterion. By \eqref{eq:NObnd} the neural-operator error in the first bracket of \eqref{eq:bdef-NO} is bounded by $\eps$, and by~\cite{krstic2026feedbacklinearizationhyperbolicpdes-trunc} the truncation tail in the second bracket is bounded by $\eps_N(\|u(\cdot,t)\|_\infty) \le \eps_N(r_0)$. Thus $|b(t)| \le \eps + \eps_N(r_0)$ for $t \in [0, T_{\max})$.

\emph{(i) Forward invariance, with closure $T_{\max} = \infty$.} The transport propagation lemma (Lemma~7 of~\cite{krstic2026feedbacklinearizationhyperbolicpdes-trunc}) gives $\|w(\cdot,t)\|_\infty \le \max(\|w_0\|_\infty,\, \eps + \eps_N(r_0))$. By \eqref{eq:cond_init_NO} and the algebraic identity $\rho_w(r_0) = r_0(1 - \Lipk(r_0))$,
\begin{equation}
\|w_0\|_\infty \;\le\; (1 + \Lipk(r_0))\,\|u_0\|_\infty \;<\; \rho_w(r_0),
\label{eq:w0-bnd-stab}
\end{equation}
and by \eqref{eq:cond_resid_NO}, $\eps + \eps_N(r_0) < \rho_w(r_0)$. Hence $\|w(\cdot,t)\|_\infty < \rho_w(r_0)$, so the conversion $\|u(\cdot,t)\|_\infty \le \frac{1}{1-\Lipk(r_0)}\,\|w(\cdot,t)\|_\infty$ (Corollary~1 of~\cite{krstic2026feedbacklinearizationhyperbolicpdes-trunc}) applies and yields the strict bound $\|u(\cdot,t)\|_\infty < \frac{1}{1-\Lipk(r_0)}\,\rho_w(r_0) = r_0$ for $t \in [0, T_{\max})$. By the contrapositive of the blow-up alternative \eqref{eq:blowup-alt}, $T_{\max} = \infty$, and (i) holds for every $t \ge 0$.

\emph{(ii) Practical stability.} Combining the conversion with the transport propagation,
\begin{eqnarray}
\|u(\cdot,t)\|_\infty
&\le& \frac{1}{1-\Lipk(r_0)}\,\max\bigl(\|w_0\|_\infty, \nonumber \\
&& \qquad\qquad\qquad\ \, \eps + \eps_N(r_0)\bigr) \nonumber \\
&\le& \max\Bigl(\frac{1+\Lipk(r_0)}{1-\Lipk(r_0)}\,\|u_0\|_\infty, \nonumber \\
&& \qquad\ \frac{\eps + \eps_N(r_0)}{1-\Lipk(r_0)}\Bigr),
\label{eq:ii-bnd}
\end{eqnarray}
using \eqref{eq:w0-bnd-stab} for the first argument of the maximum.

\emph{(iii) Practical attractivity.} For $t \ge 1$, the transport lemma (Lemma~7 of~\cite{krstic2026feedbacklinearizationhyperbolicpdes-trunc}) gives $\|w(\cdot,t)\|_\infty \le \sup_{t-1\le\tau\le t}|b(\tau)| \le \eps + \eps_N(r_0)$, and the conversion gives \eqref{eq:stab-iii}.

\emph{(iv) $\eps$-practical asymptotic stability.} Let $Y_k := \sup_{t \ge k}\|u(\cdot,t)\|_\infty$ and $\alpha(s) := \frac{1+\Lipk(r_0)}{1-\Lipk(r_0)}\,s$. The function $s \mapsto \eps_N(s)/s$ is increasing on $(0, r_0]$ (since $\eps_N(s)/s = (D_K e^{\Upsilon_K}/C_K)\,C_K^{N+1}\,s^N/(1 - C_K s)$ is the product of two increasing positive functions of $s$), so the supremum in \eqref{eq:q0-def} is attained at $s = r_0$, and
\begin{equation}
\frac{1}{1-\Lipk(r_0)}\,\eps_N(s) \;\le\; q_0(r_0)\, s, \quad 0 \le s \le r_0.
\label{eq:epsN-comp}
\end{equation}

\emph{Tail-supremum recursion.} For $t \ge k+1$, Lemma~7 of~\cite{krstic2026feedbacklinearizationhyperbolicpdes-trunc} applied on the window $[t-1, t] \subseteq [k, \infty)$ together with monotonicity of $\eps_N$ gives $\|w(\cdot,t)\|_\infty \le \eps + \eps_N(Y_k)$, and the conversion (which applies since $\eps + \eps_N(Y_k) \le \eps + \eps_N(r_0) < \rho_w(r_0)$) combined with \eqref{eq:epsN-comp} yields
\begin{eqnarray}
Y_{k+1} &\le& \frac{\eps + \eps_N(Y_k)}{1-\Lipk(r_0)} \nonumber \\
&\le& q_0(r_0)\, Y_k + \frac{\eps}{1-\Lipk(r_0)}, \quad k \ge 0.
\label{eq:Yk-rec}
\end{eqnarray}
Using the identity $\frac{\eps}{1-\Lipk(r_0)} = (1 - q_0(r_0))\,\gamma(\eps)$, equivalent to \eqref{eq:gamma-def}, this rewrites as
\begin{eqnarray}
Y_{k+1} - \gamma(\eps)
&\le& q_0(r_0)\,\bigl(Y_k - \gamma(\eps)\bigr), \quad k \ge 0.
\label{eq:Yk-rec-shifted}
\end{eqnarray}

\emph{Excess sequence.} Define $Z_k := \max(Y_k - \gamma(\eps),\, 0)$. From \eqref{eq:Yk-rec-shifted}, $Y_{k+1} - \gamma(\eps) \le q_0(r_0)\,(Y_k - \gamma(\eps))$, hence
\begin{eqnarray}
Z_{k+1} &=& \max\bigl(Y_{k+1} - \gamma(\eps),\, 0\bigr) \nonumber \\
&\le& \max\bigl(q_0(r_0)(Y_k - \gamma(\eps)),\, 0\bigr) \nonumber \\
&=& q_0(r_0)\,Z_k, \quad k \ge 0.
\label{eq:Zk-rec}
\end{eqnarray}

\emph{Initial bound on the excess.} The transport propagation lemma has two regimes: for $0 \le t \le 1$, $\|w(\cdot,t)\|_\infty \le \max(\|w_0\|_\infty,\, \sup_{0\le\tau\le t}|b(\tau)|)$; for $t \ge 1$, $\|w(\cdot,t)\|_\infty \le \sup_{t-1\le\tau\le t}|b(\tau)|$. In either regime, $\|u(\cdot,\tau)\|_\infty \le Y_0$ together with monotonicity of $\eps_N$ gives $|b(\tau)| \le \eps + \eps_N(Y_0)$ for every $\tau \ge 0$. Combining the two regimes, $\|w(\cdot,t)\|_\infty \le \max(\|w_0\|_\infty,\, \eps + \eps_N(Y_0))$ for $t \ge 0$. Both arguments of the maximum are strictly less than $\rho_w(r_0)$ (the first by \eqref{eq:w0-bnd-stab}, the second by \eqref{eq:cond_resid_NO} and $\eps_N(Y_0) \le \eps_N(r_0)$ via $Y_0 \le r_0$ from (i)), so the conversion applies and yields, on taking the supremum over $t \ge 0$ and using $\eps_N(Y_0) \le q_0(r_0)\,Y_0$ from \eqref{eq:epsN-comp} together with the identity $\frac{\eps}{1-\Lipk(r_0)} = (1-q_0(r_0))\,\gamma(\eps)$ from \eqref{eq:gamma-def},
\begin{eqnarray}
Y_0 &\le& \max\bigl(\alpha(\|u_0\|_\infty), \nonumber \\
&& \quad q_0(r_0)\,Y_0 + (1-q_0(r_0))\,\gamma(\eps)\bigr).
\label{eq:Y0-amp}
\end{eqnarray}
If $Y_0 \le \alpha(\|u_0\|_\infty)$, then $Z_0 = \max(Y_0 - \gamma(\eps), 0) \le \alpha(\|u_0\|_\infty)$ directly. Otherwise $Y_0 > \alpha(\|u_0\|_\infty)$, in which case the maximum in \eqref{eq:Y0-amp} is necessarily attained by the second argument, so $Y_0 \le q_0(r_0)\,Y_0 + (1-q_0(r_0))\,\gamma(\eps)$. Solving for $Y_0$ (using $q_0(r_0) < 1$),
\begin{eqnarray}
(1-q_0(r_0))\,Y_0 &\le& (1-q_0(r_0))\,\gamma(\eps), \nonumber \\
\text{i.e.,} \quad Y_0 &\le& \gamma(\eps),
\label{eq:Y0-case2}
\end{eqnarray}
hence $Z_0 = 0 \le \alpha(\|u_0\|_\infty)$. In both cases,
\begin{equation}
Z_0 \;\le\; \alpha(\|u_0\|_\infty).
\label{eq:Z0-bound}
\end{equation}

\emph{Class-$\mathcal{KL}$ envelope.} Iterating \eqref{eq:Zk-rec} from \eqref{eq:Z0-bound}, $Z_k \le q_0(r_0)^k\,\alpha(\|u_0\|_\infty)$ for $k \ge 0$, and hence $Y_k \le q_0(r_0)^k\,\alpha(\|u_0\|_\infty) + \gamma(\eps)$. For $t \ge 1$,
\begin{eqnarray}
\|u(\cdot,t)\|_\infty &\le& Y_{\lfloor t \rfloor} \nonumber \\
&\le& q_0(r_0)^{\lfloor t \rfloor}\alpha(\|u_0\|_\infty) + \gamma(\eps) \nonumber \\
&\le& q_0(r_0)^{t-1}\alpha(\|u_0\|_\infty) + \gamma(\eps).
\label{eq:u-bound-tge1}
\end{eqnarray}
For $t \in [0, 1]$,
\begin{eqnarray}
\|u(\cdot,t)\|_\infty &\le& Y_0 \nonumber \\
&\le& \alpha(\|u_0\|_\infty) + \gamma(\eps) \nonumber \\
&=& q_0(r_0)^0\,\alpha(\|u_0\|_\infty) + \gamma(\eps).
\label{eq:u-bound-tle1}
\end{eqnarray}
Combining, $\|u(\cdot,t)\|_\infty \le \alpha(\|u_0\|_\infty)\,q_0(r_0)^{\max(t-1, 0)} + \gamma(\eps)$, which is \eqref{eq:stab-iv}. The envelope $\beta(s, t) := \alpha(s)\,q_0(r_0)^{\max(t-1, 0)}$ is class-$\mathcal{KL}$: linear in $s$ with $\beta(0, t) = 0$, and geometrically decaying in $t$ at rate $q_0(r_0) < 1$.

\section{Numerical Results}
\label{sec:numerics}
\begin{figure*}[t]
  \centering
  \includegraphics[width=\linewidth]{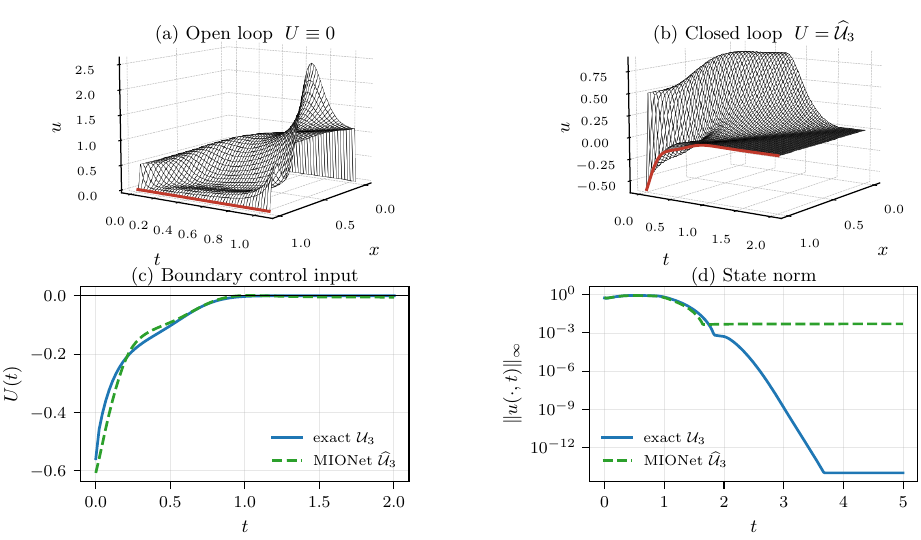}
  \caption{Closed-loop comparison at $\bm\gamma = (3,3,3,3)$, $A=4$, Initial condition has a max amplitude of $\|u_0\|_\infty = 0.5$.
    \textbf{(a)} Open-loop instability of the state $u_{\rm open}(x,t)$ with $U \equiv 0$.
    \textbf{(b)} Closed-loop state $u(x,t)$ under $\widehat{\mathcal{U}}_3$; the red curve overlays the controlled-boundary trace $u(1,t)=U(t)$.
    \textbf{(c)} Boundary control $U(t)$ for the exact $\mathcal{U}_3$ (solid blue) and the MIONet $\widehat{\mathcal{U}}_3$ (dashed green).
    \textbf{(d)} State norm $\|u(\cdot, t)\|_\infty$. Exact $\mathcal{U}_3$ reaches numerical zero; MIONet plateaus at $\sim 5\!\times\!10^{-3}$.}
  \label{fig:closed-loop}
\end{figure*}
\begin{figure*}[t]
  \centering
  \includegraphics[width=0.85\linewidth]{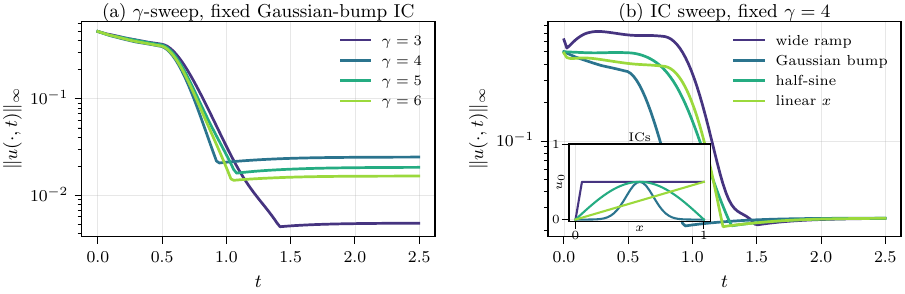}
  \caption{Robustness of $\widehat{\mathcal{U}}_3$ (no retraining), $\|u(\cdot, t)\|_\infty$ on a log axis.
    \textbf{(a)} $\bm\gamma$-sweep at fixed Gaussian IC, $A=4$: $\gamma \in \{3, 4, 5, 6\}$.
    \textbf{(b)} IC-sweep at fixed $\gamma = A = 4$; inset overlays the four $u_0(x)$.}
  \label{fig:sweep}
\end{figure*}
\begin{figure*}[t]
  \centering
  \includegraphics[width=0.85\linewidth]{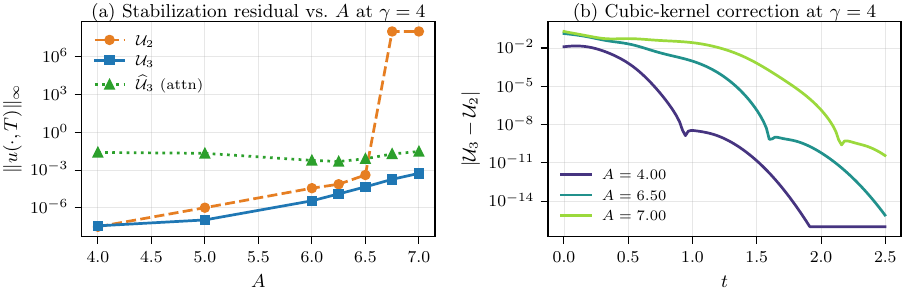}
  \caption{Truncation-order ablation at fixed $\gamma = 4$, Gaussian IC.
    \textbf{(a)} Final residual $\|u(\cdot, T)\|_\infty$ vs.\ $A$; $\mathcal{U}_2$ blows up at $A \ge 6.75$, $\mathcal{U}_3$ and $\widehat{\mathcal{U}}_3$ hold.
    \textbf{(b)} Correction $|\mathcal{U}_3(t) - \mathcal{U}_2(t)|$ for $A \in \{4, 6.5, 7\}$.}
  \label{fig:trunc}
\end{figure*}
We demonstrate the framework by learning the boundary controller directly as the operator $(f_2, f_3, u) \longmapsto \mathcal{U}_3$ at truncation order $N = 3$, where $\mathcal{U}_3$ is the kernel-based feedback of \eqref{eq:KNdef}; the order-$3$ kernels are constructed in closed form in~\cite{krstic2026feedbacklinearizationhyperbolicpdes} (see Appendix~\ref{subsec:plant-controller}).

On a common training corpus (described below) we run three closed-loop experiments and a runtime benchmark: closed-loop stabilization on a representative plant, robustness across plant parameters and initial conditions without retraining, a truncation-order ablation, and a wall-clock comparison of $\mathcal{U}_2$, $\mathcal{U}_3$, and $\widehat{\mathcal{U}}_3$. The surrogate architecture and training are in Appendix~\ref{appendix:neural-operator-design}.

\vspace{-1.5\baselineskip}
\paragraph*{\normalfont\bfseries Dataset construction.}
\label{subsec:data}
The plant coefficients are tensor products of a squared Chebyshev factor $\varphi(\gamma, z) := \cos^2(\gamma \arccos z)$, namely $f_2 = A^3\,\varphi(\gamma_2^x, x)\prod_{i}\varphi(\gamma_2^\xi, \xi_i)$ and $f_3 = A^4\,\varphi(\gamma_3^x, x)\prod_i \varphi(\gamma_3^\xi, \xi_i)$, with shape scalars $\bm\gamma := (\gamma_2^x, \gamma_2^\xi, \gamma_3^x, \gamma_3^\xi) \sim \mathcal{U}[3,6]^4$ and amplitude $A \sim \mathcal{U}[4,7]$ drawn per plant; the squared form keeps the Volterra forcing signed, and since $|\varphi|\le1$ the cube gives $\|f_n\|_\infty \le A^{n+1}$, $\operatorname{Lip}(\varphi) \le 2\gamma^2$, a band-limited subclass of $\mathcal{F}_3^\star(B_f, L_f)$~\cite{Bhan2024Neural}. States are drawn from a Mat\'ern-$3/2$ Gaussian field (length scale $\ell = 0.15$, $\sigma = 0.4$), shifted to $u(0)=0$ and rejection-sampled to $\|u\|_\infty \le R = 1$, $\operatorname{Lip}(u) \le L_u = 25$; the bounds and $\ell$ are calibrated to a pilot $\mathcal{U}_3$ closed loop, so the corpus covers states the simulator actually visits. Sampling $200$ plants $\times\, 500$ states gives $M = 100{,}000$ triples $(\bm\gamma, A, u, \mathcal{U}_3)$; kernels $k_2, k_3$ are built once per plant on a simplex grid ($N_x = 40$) and targets evaluated by Simpson quadrature (Appendix~\ref{appendix:numerical-solver-details}).

\vspace{-1.5\baselineskip}
\paragraph*{\normalfont\bfseries Neural-operator experiments.}
\label{subsec:training}

The surrogate $\widehat{\mathcal{U}}_3$ is a small transformer-encoder network taking the three functions $(f_2, f_3, u)$ on the grid --- not the parameters $(\bm\gamma, A)$ --- so it generalizes to any admissible plant with a grid representation. It is trained on the $100{,}000$-triple corpus, split $80/10/10$ by plant, with AdamW~\cite{loshchilov2018decoupled} for $200$ epochs; architecture~\cite{Vaswani2017Attention,Hao2023GNOT,Jin2022MIONet}, hyperparameters, and diagnostics are in Appendix~\ref{appendix:neural-operator-design}. The deployed operator reaches $30.6\%$ relative RMSE over $20$ unseen $(\bm\gamma, A)$ plants.

\vspace{-1.5\baselineskip}
\paragraph*{Closed-loop example.}
Figure~\ref{fig:closed-loop} shows a representative closed loop. Open-loop ($U \equiv 0$) the plant blows up; under both $\mathcal{U}_3$ and $\widehat{\mathcal{U}}_3$ it stabilizes, the exact $\mathcal{U}_3$ to machine precision and the surrogate to a residual floor of order $10^{-3}$ set by its approximation error --- the practical stability of Theorem~\ref{thm:stab}. Without retraining, $\widehat{\mathcal{U}}_3$ stabilizes across a $\gamma$-sweep ($\gamma \in \{3,4,5,6\}$, fixed $A=4$) and an initial-condition sweep (ramp, Gaussian, half-sine, linear; fixed $\gamma = A = 4$), all settling near $10^{-2}$ (Figure~\ref{fig:sweep}). The truncation order matters at large coefficient Lipschitz: sweeping $A \in \{4,\ldots,7\}$ at $\bm\gamma = (4,4,4,4)$ (Figure~\ref{fig:trunc}), the quadratic-only $\mathcal{U}_2$ fails at $A \ge 6.75$ while $\mathcal{U}_3$ and $\widehat{\mathcal{U}}_3$ hold, though the controller difference $|\mathcal{U}_3 - \mathcal{U}_2|$ is only of order $10^{-2}$.

\begin{table}[t]
  \centering
  \caption{Per-mapping runtime on CUDA over $5$ seeds, $N = 100$ random plants per seed.}
  \label{tab:runtime}
  \small
  \begin{tabular}{@{}lrr@{}}
    \toprule
    Method & Mean $\pm$ std (ms) & Speedup \\
    \midrule
    $\mathcal{U}_2$ ($k_2$)      &    $0.63 \pm 0.06$  & $6296\times$ \\
    $\mathcal{U}_3$ ($k_2,k_3$)  & $3947.29 \pm 15.64$ & $1\times$ \\
    MIONet $\widehat{\mathcal{U}}_3$ &    $0.998 \pm 0.076$ & $3957\times$ \\
    \bottomrule
  \end{tabular}
\end{table}
The payoff of the surrogate is amortized online cost: the classical $\mathcal{U}_3$ pipeline rebuilds $k_2, k_3$ and runs nested Simpson quadrature at $\approx 3.95$~s per call, while the trained $\widehat{\mathcal{U}}_3$ evaluates in $\approx 1.0$~ms --- a $\mathbf{\sim\!4000\times}$ speedup, at a one-time training cost. Table~\ref{tab:runtime} reports per-mapping wall-clock over $100$ random $(\bm\gamma, A, u)$ triples per seed across $5$ seeds, all on a single NVIDIA RTX 4070 SUPER, with both classical pipelines implemented in PyTorch on the same GPU.

\section{Conclusion}
\label{sec:conclusion}

We replaced the plant-specific truncated Volterra feedback --- which recomputes $N-1$ kernels and an $N$-fold nested integration at every control update --- with a single neural-operator surrogate over plants and states, evaluated in constant time. The surrogate is admissible because the truncated map is jointly Lipschitz on a compact class, and the closed loop tolerates both the truncation residual $\eps_N(r_0)$ and the network error $\eps$ at once. All four stability properties of Part~I survive, with a class-$\mathcal{KL}$ envelope in the initial condition and linear dependence on $\eps$, recovering the exact design as $\eps \to 0$.

Numerically, the learned operator evaluates the feedback in $1$~millisecond, against $3.9$~seconds to assemble the kernels and integrate --- a speedup of $\approx 4000\times$  (Table~\ref{tab:runtime}). The neural operator is trained once and stabilizes plants unseen in training, including where the cubic kernel is needed and $\mathcal{U}_2$ fails.

\appendix
\section{Numerical Details}
\label{app:numerical-details}
\label{subsec:plant-controller}
\label{appendix:numerical-solver-details}
\label{appendix:neural-operator-design}

At $N=3$ the plant carries only quadratic and cubic coefficients $f_2, f_3$, and the kernels $k_2, k_3$ are the order-$3$ instance of the characteristic recursion of~\cite{krstic2026feedbacklinearizationhyperbolicpdes} (equation~\eqref{eq:kndef}), where the explicit closed forms are derived. No kernel PDE is solved: each $k_n$ reduces to evaluations of $f_2, f_3$ at characteristic-shifted arguments and quadrature on intervals, and the controller is the boundary trace $\mathcal{U}_{3,k}(\mathbf{k}_3, u) = \int_{T_2(1)} k_2(1,\bm\xi)\, u^{\otimes 2}\, d\bm\xi + \int_{T_3(1)} k_3(1,\bm\xi)\, u^{\otimes 3}\, d\bm\xi$.

The simulator uses method-of-lines: first-order upwind~\cite{leveque2007finite} for $u_x$ on a uniform grid ($N_x = 40$, $\Delta x = 0.025$) and explicit RK4 in time with $\Delta t = 0.9\,\Delta x$ (CFL ratio $0.9$); integrals use Simpson's rule~\cite{burden2015numerical}.

The surrogate's input is the three functions $f_2, f_3, u$ sampled on the simulator grid ($N_x = 40$), so it sees the same on-grid values a finite-difference solver would and the $(\bm\gamma, A)$ parameterization is never disclosed --- any plant family can be used without redesign. The network follows MIONet~\cite{Jin2022MIONet} (the multi-input lifting of DeepONet~\cite{Lu2021Universal}) with an attention encoder~\cite{Cao2021Galerkin,Hao2023GNOT,Vaswani2017Attention}: the three channels are tokenized on the grid with learnable positional and channel embeddings, a prepended learnable query row is mixed through a two-layer, four-head pre-norm Transformer encoder (embedding dimension $64$, GELU MLPs of width $256$, $107{,}585$ parameters), and its output row is read out through a two-layer MLP to the scalar $\widehat{\mathcal{U}}_3$. Training is standardized MSE by AdamW for $200$ epochs (cosine schedule, dropout $0.1$), split $80/10/10$ by plant so validation and test plants are unseen $(\bm\gamma, A)$ pairs; the final test relative RMSE is $30.6\%$.

\vspace*{-1em}
\paragraph*{\normalfont\bfseries Acknowledgment.}
The author thanks Luke Bhan, who declined coauthorship of this primarily theoretical paper. Luke developed the numerical results in Section \ref{sec:numerics} and the Appendix.

\bibliographystyle{plain}
\bibliography{bib-hyp-FL}

\end{document}